\documentclass[a4paper,twocolumn,11pt,allowtoday]{quantumarticle}

\pdfoutput=1

\usepackage{tikz}
\usetikzlibrary{arrows.meta,positioning,calc,shapes.geometric,fit,backgrounds,patterns}

\usepackage{amsmath,amssymb,amsthm}
\usepackage{mathtools}
\usepackage{bbm}
\usepackage{braket}

\usepackage[numbers]{natbib}

\usepackage{hyperref}

\theoremstyle{plain}
\newtheorem{theorem}{Theorem}[section]
\newtheorem{lemma}[theorem]{Lemma}
\newtheorem{proposition}[theorem]{Proposition}
\newtheorem{corollary}[theorem]{Corollary}

\theoremstyle{definition}
\newtheorem{definition}[theorem]{Definition}

\theoremstyle{remark}
\newtheorem{remark}[theorem]{Remark}

\title{Geometric Classification of Biased Quantum Capacity via Harmonic Translation}

\author{Eliseo Sarmiento Rosales}
\affiliation{Instituto Politécnico Nacional (IPN), Mexico}
\author{Egor Maximenko}
\affiliation{Instituto Politécnico Nacional (IPN), Mexico}
\author{Dionisio Manuel Tun Molina}
\affiliation{Instituto Politécnico Nacional (IPN), Mexico}
\author{Juan Carlos Jiménez Cervantes}
\affiliation{Instituto Politécnico Nacional (IPN), Mexico}
\author{Jose Alberto Guzmán Vega}
\affiliation{Instituto Politécnico Nacional (IPN), Mexico}
\author{Rodrigo León Morales}
\affiliation{Instituto Politécnico Nacional (IPN), Mexico}
\date{\today}

\begin{document}
\maketitle

\begin{abstract}

We establish an exact noise-model-derived characterization of quantum error correction under diagonal local phase noise.
Under uniform locality, the maximal logical dimension under $t$-local phase errors equals $A_q(n,2t+1)$, the classical $q$-ary packing function. Because no affine or stabilizer structure is imposed, nonlinear spectral supports achieve this bound and strictly exceed all affine constructions whenever $A_q(n,2t+1) > B_q(n,2t+1)$.
This follows from a harmonic translation principle: diagonal phase operators act as rigid translations in the Fourier domain, reducing the Knill--Laflamme conditions exactly to an additive non--collision constraint $(S - S) \cap \mathcal{E}_t = \{0\}$.
For structured phase noise, exact correction is equivalent to independence in an additive Cayley graph, connecting biased quantum capacity to classical zero-error theory and the Lov\'asz theta function.
Under mixed Pauli noise, simultaneous protection in conjugate domains incurs an intrinsic rate penalty $R \leq 1 - (\gamma_X + \gamma_Z)/2$, exposing a discrete harmonic uncertainty principle.
In contrast with stabilizer- or graph-based frameworks, this classical correspondence is derived directly from the phase-noise model itself rather than from an auxiliary algebraic construction.
\end{abstract}

\keywords{quantum error correction, phase noise, harmonic analysis, 
nonlinear quantum codes, additive combinatorics, Cayley graphs, 
Lov\'asz theta function, zero-error capacity}

\section{Introduction}
\label{sec:introduction}

\subsection{Motivation: Phase-Biased Noise and Structural Freedom}
\label{subsec:motivation}

The Knill–Laflamme conditions \cite{KnillLaflamme1997} provide necessary and sufficient criteria for exact quantum error correction. For several decades, the dominant framework for satisfying these conditions has been the stabilizer formalism \cite{Gottesman1997}, including Calderbank–Shor–Steane (CSS) codes \cite{CalderbankShor1996,Steane1996} and related constructions over finite fields \cite{CRSS1998}. By restricting code spaces to joint eigenspaces of commuting Pauli operators, stabilizer theory reduces quantum error correction to linear algebra over finite fields. While extraordinarily powerful, this algebraic reduction restricts admissible code spaces to additive or affine structures. In particular, for qubits ($q=2$) the logical dimension of stabilizer constructions is necessarily a power of two, a restriction that stems from the imposed linear framework rather than from the Knill–Laflamme conditions themselves.

The distinction becomes especially relevant in architectures exhibiting strongly biased noise. Experimental and theoretical studies indicate that realistic platforms---such as Kerr-cat qubits and bias-preserving superconducting circuits---often experience dephasing rates that significantly exceed bit-flip or relaxation rates \cite{AliferisPreskill2008,Tuckett2018,GuillaudMirrahimi2019}. More recently, cat-qubit experiments have provided direct hardware evidence for this regime, including long bit-flip lifetimes in two-photon dissipative oscillators \cite{Berdou2023}. In such settings, it is natural to analyze codes optimized specifically for diagonal local phase errors.

Quantum error correction under asymmetric noise has also been extensively studied. Early work by Ioffe and M{\'e}zard introduced constructions optimized for asymmetric Pauli channels \cite{IoffeMezard2007}, while Sarvepalli, Klappenecker, and R\"otteler developed systematic families of asymmetric quantum codes derived from classical linear codes \cite{Sarvepalli2009}. These approaches exploit noise bias to improve distance properties or thresholds within stabilizer-based frameworks. Our perspective is different: rather than modifying stabilizer algebra to accommodate bias, we ask whether the Knill--Laflamme conditions themselves already admit a broader geometric description of phase-dominated protection. To our knowledge, no prior framework derives this capacity correspondence directly from the diagonal phase-noise model while simultaneously showing that nonlinear spectral supports are admissible and can strictly outperform all affine constructions.

Rather than introducing an auxiliary stabilizer, graph, or decoder-dependent construction, we derive the correspondence directly from the harmonic action of diagonal phase errors.  Working over $V=\mathbb{F}_q^n$, we define quantum codes purely by their spectral support under the discrete Quantum Fourier Transform (QFT). A key observation is that diagonal phase operators become exact translations in the Fourier domain. Under this transformation, the Knill–Laflamme conditions reduce to a purely additive separation property of difference sets. Under uniform locality this additive separation condition reduces to a classical Hamming-distance constraint on the spectral support. 

This reformulation has two consequences that are central to the paper’s novelty. First, classical bounds and extremal constructions transfer directly to the quantum setting. Second, non-linear classical codes become admissible spectral supports, allowing strictly larger logical dimensions in regimes where linear codes are suboptimal. Finally, when both bit- and phase-flip errors are considered, admissible supports must satisfy isolation conditions in dual domains, reflecting a discrete harmonic tradeoff closely related to uncertainty principles on finite abelian groups \cite{DonohoStark1989}. In this sense, our framework isolates the geometric content of phase-biased protection from the algebraic restrictions usually imposed in conventional code constructions.

\subsection{Main Contributions}

\begin{itemize}

\item \textbf{Harmonic Translation Principle and Exact Characterization.}
We show that diagonal phase operators act as rigid translations in the Fourier domain. Consequently, the Knill–Laflamme conditions for $t$-local phase-error detection are necessary and sufficient for the additive non-collision condition
\[
(S-S)\cap E_t=\{0\}.
\]
In the uniform locality regime, this criterion is strictly equivalent to the classical Hamming distance requirement $d(S)\ge t+1$, with exact correction corresponding to $d(S)\ge 2t+1$. Thus phase-local quantum error correction reduces exactly to additive geometry on a finite abelian group, with no need to impose stabilizer, affine, or graph-state structure.

\item \textbf{Exact Capacity Identity and Classical Transfer.}
Under uniform locality, the maximal logical dimension satisfies
\[
K_{\max}(n,t)=A_q(n,2t+1),
\]
coinciding precisely with the classical packing function. Classical extremal constructions, upper and lower bounds, asymptotic rate guarantees, and decoding algorithms therefore transfer verbatim to the phase-local quantum setting. In particular, this gives an exact capacity formula for phase-local quantum error correction in the uniform-locality regime.

\item \textbf{Nonlinear Advantage Beyond Affine Structure.}
Because admissibility is determined solely by the noise-induced non-collision condition, nonlinear classical codes become valid spectral supports without any affine closure requirement. Whenever $A_q(n,d)>B_q(n,d)$, this strict separation transfers directly to phase-local quantum codes. We exhibit explicit finite examples (the $(8,20,3)$ Julin--Best code and the $(16,256,6)$ Nordstrom--Robinson code) and an asymptotic infinite family derived from Kerdock codes achieving logical dimension $2^{2m}$ against $2^{m+1}$ for any affine construction with comparable phase distance, confirming that the separation is structural and persists asymptotically.

\item \textbf{Structured Phase Noise and Graph-Theoretic Reformulation.}
For arbitrary phase-error families $\Omega$, exact correction is equivalent to independence in the additive Cayley graph $\Gamma_\Omega=\mathrm{Cay}(V,D_\Omega)$, yielding the exact identity
\[
K_{\max}(\Omega)=\alpha(\Gamma_\Omega).
\]
This reformulation connects biased quantum capacity directly to additive combinatorics, graph independence, and semidefinite relaxations such as the Lovász theta bound, and shows that additive symmetry in $D_\Omega$ induces intrinsic capacity collapse.

\item \textbf{Dual-Domain Harmonic Tradeoffs for Mixed Noise.}
When simultaneous protection against bit- and phase-flip errors is required, we derive dual-isolation conditions in conjugate domains and establish multiplicative capacity bounds. In asymptotic regimes where $|\Omega_X|=q^{\gamma_X n+o(n)}$ and $|\Omega_Z|=q^{\gamma_Z n+o(n)}$, the achievable rate satisfies
\[
R \le 1-\frac{\gamma_X+\gamma_Z}{2},
\]
revealing an intrinsic harmonic tradeoff between conjugate-domain localization constraints.

\end{itemize}

These results establish that biased quantum capacity is entirely determined by the additive geometry of the noise difference set, with distinct regimes arising from its dispersive structure, additive symmetry, or dual-domain interaction, while remaining exact for diagonal phase-noise models and the dual-isolated mixed-noise setting considered here, and not extending to arbitrary Pauli noise.

\subsection{Related Work}

Quantum error correction is founded on the Knill–Laflamme conditions~\cite{KnillLaflamme1997}, which provide necessary and sufficient criteria for exact recovery. The stabilizer formalism~\cite{Gottesman1997} and the Calderbank–Shor–Steane (CSS) constructions~\cite{CalderbankShor1996,Steane1996}, together with their extension via codes over $\mathrm{GF}(4)$~\cite{CRSS1998}, reduce the design of quantum codes to linear algebra over finite fields. In these frameworks, admissible code spaces are restricted to additive or affine structures, and logical dimension is determined by the dimension of an underlying classical linear code. This algebraic reduction has been extraordinarily successful and underlies much of modern quantum coding theory, but it achieves the classical correspondence by imposing structural constraints beyond the Knill–Laflamme conditions themselves.

Beyond strictly additive constructions, nonadditive quantum codes have long been known to surpass stabilizer parameters in certain regimes~\cite{Rains1999}, and the codeword stabilized (CWS) framework~\cite{Cross2009} provides a unifying description encompassing both additive and many nonadditive codes. In CWS, quantum error correction is translated into a classical problem through the choice of a graph state and an associated set of word operators, which induce an effective classical error model. These approaches demonstrate that linearity is not fundamentally required for quantum coding and that classical structures can guide nonadditive constructions. However, in such frameworks the classical correspondence is mediated by an auxiliary stabilizer or graph-state construction and an induced Pauli-error mapping, rather than being derived directly from the phase-noise model itself.

In parallel, substantial work has examined quantum error correction under biased noise, particularly in regimes where dephasing dominates bit-flip processes~\cite{AliferisPreskill2008,Tuckett2018,Tuckett2019,BonillaAtaides2021}. Architectures based on cat qubits and bias-preserving gates further emphasize the operational relevance of phase-dominated error models~\cite{GuillaudMirrahimi2019,Berdou2023,Puri2020,EtxezarretaMartinez2024}. These studies exploit bias to improve thresholds or tailor decoding strategies, and correlated dephasing models have also been analyzed in surface-code settings~\cite{NickersonBrown2019,Layden2020}. Closely related in spirit, asymmetric quantum error correction has been developed as a systematic response to biased Pauli channels: Ioffe and M{\'e}zard~\cite{IoffeMezard2007} introduced asymmetric quantum codes adapted to unequal error rates, and Sarvepalli, Klappenecker, and R\"otteler~\cite{Sarvepalli2009} established constructions and bounds for asymmetric quantum codes within predominantly stabilizer-based settings. Nevertheless, existing analyses primarily focus on specific code families, decoder adaptations, or architectural implementations, and do not yield an exact capacity characterization derived directly from diagonal phase noise.

Our perspective is also related, at a conceptual level, to Fourier-dual formulations of quantum protection in continuous-variable systems. In particular, the Gottesman--Kitaev--Preskill (GKP) code~\cite{GKP2001} exploits conjugate-domain structure in oscillator phase space. While that setting is continuous-variable rather than finite-group-valued, it similarly illustrates how error protection can be naturally expressed through spectral separation in conjugate variables.

Finally, our graph-theoretic formulation connects phase-local quantum error correction with classical zero-error information theory. Shannon’s zero-error capacity~\cite{Shannon1956} and the Lovász theta function~\cite{Lovasz1979}, together with subsequent developments in semidefinite relaxations and graph invariants~\cite{Sason2023,Sason2024}, establish independence numbers and their relaxations as fundamental capacity parameters for classical channels. For Cayley graphs in particular, DeCorte, de Laat, and Vallentin~\cite{DeCorte2014} showed how Fourier analysis on finite groups can be used to derive semidefinite bounds via the Lovász theta number. By identifying exact phase-error correction with independence in an additive Cayley graph, our framework situates biased quantum capacity directly within this extremal combinatorial setting, enabling the transfer of classical bounds, semidefinite relaxations, and asymptotic rate analyses to the phase-local quantum regime.

\subsection{Structure of the Paper}
\label{subsec:structure}

Section~\ref{sec:harmonic} establishes the harmonic translation principle and derives the exact non-collision criterion for phase-local detection. Section~\ref{sec:classical_reduction} reduces uniform locality to classical coding theory and formulates the exact capacity identity and transfer principle. Section~\ref{sec:dimensional_separation} analyzes dimensional separation between affine and nonlinear spectral supports. Section~\ref{sec:graph_reformulation} extends the framework to structured phase noise via additive Cayley graphs and harmonic relaxations. Section~\ref{sec:dual_tradeoffs} studies mixed Pauli errors and derives dual-domain capacity tradeoffs under simultaneous conjugate protection. Section~\ref{sec:discussion} culminates in a geometric classification of biased quantum capacity. Section~\ref{sec:catqubit} instantiates the harmonic framework on a concrete physical platform: stabilised cat-qubit arrays operating in the strongly phase-biased regime. It derives explicit logical capacities under both uniform and correlated phase noise, establishes the quantitative nonlinear advantage over affine constructions, and shows that thresholds are unaffected by the choice of spectral support. Finally, Section~\ref{sec:implications} discusses structural consequences, physical interpretations, and open problems.

\section{Harmonic Translation and Exact Characterization}
\label{sec:harmonic}

In this section we recast phase-local error detection in purely harmonic terms. 
The key structural observation is that diagonal phase operators become exact translations under the Fourier transform over a finite abelian group. 
This reduces phase-local detection to an additive combinatorial condition on subsets of the spectral index space.

\subsection{Finite Abelian Harmonic Framework}
\label{subsec:harmonic_framework}

We consider quantum systems composed of $n$ qudits of local dimension $q=p^m$. 
Throughout, we identify
\[
V=\mathbb{F}_q^n
\]
with a finite abelian group under addition, so that $|V|=q^n$. 
The associated Hilbert space is $\mathcal{H}=\mathbb{C}^{V}$, equipped with the computational basis $\{|x\rangle\}_{x\in V}$.

Let $\mathrm{Tr}=\mathrm{Tr}_{\mathbb{F}_q/\mathbb{F}_p}$ denote the finite-field trace
\[
\mathrm{Tr}(a)=a+a^p+\cdots+a^{p^{m-1}},
\qquad a\in\mathbb{F}_q.
\]
The trace is $\mathbb{F}_p$-linear and induces a non-degenerate bilinear pairing
\[
\langle s,x\rangle
=
\mathrm{Tr}(s\cdot x),
\qquad s,x\in V,
\]
where
\[
s\cdot x=\sum_{j=1}^n s_j x_j \in \mathbb{F}_q.
\]
Non-degeneracy means that
\[
\langle s,x\rangle=0 \ \text{for all } x\in V
\quad\Longrightarrow\quad
s=0,
\]
and symmetrically in the second argument. 
This trace pairing realizes the standard additive Fourier structure on finite fields (see, e.g., \cite{MacWilliamsSloane}).

Fix $\zeta=e^{2\pi i/p}$ and define the canonical additive character
of $\mathbb{F}_q$ by
\[
\psi(a)=\zeta^{\mathrm{Tr}(a)},
\qquad a\in\mathbb{F}_q .
\]
It is well known that every additive character of the finite field
$\mathbb{F}_q$ is of the form
\[
\psi_b(a)=\psi(ba)=\zeta^{\mathrm{Tr}(ba)},
\qquad b\in\mathbb{F}_q
\]
(see, e.g., \cite{LidlNiederreiter}).

Consequently, the additive characters of the group
$V=\mathbb{F}_q^n$ are indexed by $s\in V$ and take the form
\[
x\longmapsto\psi(s\cdot x)
=
\zeta^{\mathrm{Tr}(s\cdot x)}
=
\zeta^{\langle s,x\rangle}.
\]
This realizes the standard identification of $V$ with its Pontryagin
dual via the trace pairing, so that the characters of $V$ are
parametrized by $s\in V$ (see, e.g., \cite{TaoVu}).

\begin{definition}[Quantum Fourier Transform over $V$]
The Quantum Fourier Transform (QFT) over $V$ is the linear operator
\[
\mathcal{F}_V:\mathbb{C}^{V}\to\mathbb{C}^{V}
\]
defined by
\begin{equation}
\label{eq:QFT_V}
\mathcal{F}_V |x\rangle
=
\frac{1}{\sqrt{q^n}}
\sum_{s\in V}
\zeta^{\mathrm{Tr}(s\cdot x)}\,|s\rangle.
\end{equation}
\end{definition}

Equivalently, the Fourier basis $\{|s\rangle_F\}_{s\in V}$ is given by
\begin{equation}
\label{eq:Fourier_basis}
|s\rangle_F
=
\frac{1}{\sqrt{q^n}}
\sum_{x\in V}
\zeta^{\mathrm{Tr}(s\cdot x)}\,|x\rangle.
\end{equation}

The non-degeneracy of the trace pairing implies the orthogonality relation
\[
\sum_{x\in V}
\zeta^{\mathrm{Tr}((s-s')\cdot x)}
=
q^n\,\delta_{s,s'},
\]
so $\{|s\rangle_F\}_{s\in V}$ forms an orthonormal basis and $\mathcal{F}_V$ is unitary.

This harmonic structure identifies $V$ with its Pontryagin dual and determines the action of diagonal phase operators under Fourier conjugation.

\subsection{Phase Errors as Spectral Translations}
\label{subsec:phase_translations}

For each $\omega\in V$, define the diagonal phase operator
\begin{equation}
Z^\omega |x\rangle
=
\psi(\omega\cdot x)\,|x\rangle.
\end{equation}

\begin{lemma}[Spectral Translation]
\label{lem:spectral_translation}
For all $s,\omega\in V$,
\begin{equation}
Z^\omega |s\rangle_F
=
|s+\omega\rangle_F.
\end{equation}
\end{lemma}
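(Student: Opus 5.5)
Since $Z^\omega$ is diagonal in the computational basis, I will expand $|s\rangle_F$ in that basis using \eqref{eq:Fourier_basis}, apply $Z^\omega$ term by term, and recognise the result as another Fourier basis vector. No earlier result beyond the definitions is needed; the only ingredients are linearity of $Z^\omega$ and additivity of the character $\psi$.

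\textbf{Expansion.} By \eqref{eq:Fourier_basis} and linearity,
\[
Z^\omega|s\rangle_F=\frac{1}{\sqrt{q^n}}\sum_{x\in V}\zeta^{\mathrm{Tr}(s\cdot x)}\,\psi(\omega\cdot x)\,|x\rangle .
\]

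\textbf{Combining the phases.} Write $\psi(\omega\cdot x)=\zeta^{\mathrm{Tr}(\omega\cdot x)}$. The dot product is bilinear, so $s\cdot x+\omega\cdot x=(s+\omega)\cdot x$. The trace is additive, so $\mathrm{Tr}(s\cdot x)+\mathrm{Tr}(\omega\cdot x)=\mathrm{Tr}((s+\omega)\cdot x)$. Since $\zeta^{a}\zeta^{b}=\zeta^{a+b}$ for exponents in $\mathbb{F}_p$ (well defined because $\zeta^p=1$), the coefficient of $|x\rangle$ becomes $\zeta^{\mathrm{Tr}((s+\omega)\cdot x)}$.

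\textbf{Conclusion.} Substituting back, the sum is exactly the right-hand side of \eqref{eq:Fourier_basis} with $s$ replaced by $s+\omega$. Hence $Z^\omega|s\rangle_F=|s+\omega\rangle_F$.

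There is no substantive obstacle. The one point needing care is the sign convention: the QFT in \eqref{eq:QFT_V} and the basis in \eqref{eq:Fourier_basis} both use the positive exponent, and $Z^\omega$ also carries a positive exponent, so the shift is $+\omega$ rather than $-\omega$. I will check this explicitly so the statement holds exactly as written.
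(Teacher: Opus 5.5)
Your proposal is correct and matches the paper's proof: expand $|s\rangle_F$ in the computational basis, combine $\psi(s\cdot x)\psi(\omega\cdot x)=\psi((s+\omega)\cdot x)$ using bilinearity of the dot product and additivity of the trace, and recognise the result as $|s+\omega\rangle_F$. Your remark that $\zeta^{a}$ is well defined for $a\in\mathbb{F}_p$ because $\zeta^p=1$ is a detail the paper leaves implicit.
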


\begin{proof}
Using \eqref{eq:Fourier_basis},
\begin{align*}
Z^\omega |s\rangle_F
&=
\frac{1}{\sqrt{q^n}}
\sum_{x\in V}
\psi(s\cdot x)\psi(\omega\cdot x)\,|x\rangle \\
&=
\frac{1}{\sqrt{q^n}}
\sum_{x\in V}
\psi((s+\omega)\cdot x)\,|x\rangle,
\end{align*}
which equals $|s+\omega\rangle_F$.
\end{proof}

Equivalently,
\begin{equation}
\label{eq:conjugation}
\mathcal{F}_V Z^\omega \mathcal{F}_V^\dagger
=
T^\omega,
\end{equation}
where $T^\omega |s\rangle = |s+\omega\rangle$.

Thus diagonal phase operators act as rigid translations of the Fourier index set. 
If
\[
|\psi\rangle
=
\sum_{s\in V} \alpha_s |s\rangle_F,
\]
then
\[
Z^\omega |\psi\rangle
=
\sum_{s\in V} \alpha_s |s+\omega\rangle_F.
\]

Phase errors therefore do not deform spectral amplitudes; they translate them exactly.

\subsection{Fourier-Support Codes and Exact Detection}
\label{subsec:non_collision}

We now introduce the class of codes central to our harmonic framework.

\begin{definition}[Fourier-Support Code]
Let $S\subseteq V$. The associated Fourier-support code is
\[
\mathcal{C}(S)
=
\mathrm{span}_{\mathbb{C}}
\{\,|s\rangle_F : s\in S\,\}.
\]
Its logical dimension is $K=|S|$.
\end{definition}

The orthogonal projector onto $\mathcal{C}(S)$ is
\[
P_S=\sum_{s\in S}|s\rangle_F\langle s|_F.
\]

To determine when $\mathcal{C}(S)$ detects or corrects a given family of errors,
we invoke the fundamental characterization of quantum error correction.

\begin{theorem}[Knill--Laflamme {\cite{KnillLaflamme1997}}]
\label{thm:KL}
Let $\mathcal{C}\subseteq\mathcal{H}$ be a quantum code with orthogonal projector $P$,
and let $\mathcal{E}=\{E_a\}$ be a finite family of error operators.

\begin{enumerate}
\item \textbf{(Detection).}
$\mathcal{C}$ detects $\mathcal{E}$ if and only if, for every $a$,
\[
P E_a P = \lambda_a P
\quad\text{for some scalar } \lambda_a\in\mathbb{C}.
\]

\item \textbf{(Exact Correction).}
$\mathcal{C}$ exactly corrects $\mathcal{E}$ if and only if, for all $a,b$,
\[
P E_a^\dagger E_b P = c_{ab} P
\quad\text{for some scalars } c_{ab}\in\mathbb{C}.
\]
Equivalently, correction of $\mathcal{E}$ is equivalent to detection of the product family
$\{E_a^\dagger E_b\}_{a,b}$.
\end{enumerate}
\end{theorem}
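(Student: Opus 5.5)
The Knill--Laflamme theorem is a standard result that the paper cites rather than derives. I would therefore prove it in the usual operational way, treating detection and correction separately. Throughout, the code is $\mathcal{C}$ with projector $P$, and ``correction'' means the existence of a CPTP recovery map $\mathcal{R}$ with $\mathcal{R}(E_a\rho E_b^\dagger)=c'_{ab}\rho$ for all $\rho$ supported on $\mathcal{C}$. The closing remark of part 2 then follows immediately: the condition $PE_a^\dagger E_bP=c_{ab}P$ is literally the detection condition for the operator $E_a^\dagger E_b$.

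\textbf{Detection.} Say $\mathcal{C}$ detects $E_a$ when, for every code state $|\psi\rangle$, the vector $E_a|\psi\rangle$ equals $\lambda_a|\psi\rangle$ plus something orthogonal to $\mathcal{C}$. For sufficiency, $PE_aP=\lambda_aP$ gives $E_a|\psi\rangle=\lambda_a|\psi\rangle+(I-P)E_a|\psi\rangle$, so a measurement of $\{P,I-P\}$ either flags the error or leaves the state undisturbed. For necessity, suppose $PE_aP|\psi\rangle=\lambda_a(\psi)|\psi\rangle$ for every $\psi\in\mathcal{C}$. Then every vector of $\mathcal{C}$ is an eigenvector of the operator $PE_aP$ restricted to $\mathcal{C}$. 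By the standard linear-algebra fact, such an operator must be a scalar multiple of the identity: if $\lambda(\psi_1)\ne\lambda(\psi_2)$, then $\psi_1+\psi_2$ would fail to be an eigenvector.

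\textbf{Correction, sufficiency.} Form the matrix $C=(c_{ab})$. It is Hermitian and positive semidefinite, since $\sum\bar u_a u_b c_{ab}P=PF^\dagger FP$ with $F=\sum u_aE_a$. Diagonalize it as $C=U^\dagger DU$ and pass to the rotated errors $F_k=\sum_a U_{ka}E_a$, which satisfy $PF_k^\dagger F_lP=d_k\delta_{kl}P$. For each $k$ with $d_k>0$, polar decomposition gives $F_kP=\sqrt{d_k}\,U_kP$ with $U_k$ unitary. The subspaces $U_k\mathcal{C}$ are mutually orthogonal, with projectors $P_k=U_kPU_k^\dagger$. The recovery map is then: measure $\{P_k\}$, completed by $I-\sum_k P_k$, and apply $U_k^\dagger$ on outcome $k$. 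Any error in the linear span of the $E_a$ is a combination of the $F_k$, and one checks that this recovery returns the code state up to a scalar.

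\textbf{Correction, necessity.} Write $\mathcal{R}$ in Kraus form $\{R_j\}$. The recovery condition forces $R_jE_aP=\mu_{ja}P$; this is the main step. To justify it, note that the map $\rho\mapsto\mathcal{R}(E_a\rho E_a^\dagger)$ is proportional to the identity on $\mathcal{C}$. A channel that acts as the identity on a subspace must have every Kraus operator proportional to the identity there, which one shows by comparing Kraus decompositions via their unitary freedom. Given this, $PE_a^\dagger E_bP=\sum_j PE_a^\dagger R_j^\dagger R_jE_bP=\big(\sum_j\bar\mu_{ja}\mu_{jb}\big)P$, using $\sum_j R_j^\dagger R_j=I$.

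The main obstacle is exactly this Kraus-rigidity step, together with the bookkeeping required when $C$ is singular, where the zero-eigenvalue errors $F_k$ annihilate the code. Both points are standard, so in the paper it suffices to cite \cite{KnillLaflamme1997} for them.
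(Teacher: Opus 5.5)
Your proposal is correct and consistent with the paper, which gives no proof of its own and simply cites \cite{KnillLaflamme1997}. Your sketch is the standard textbook argument: the every-vector-is-an-eigenvector lemma for detection, then diagonalizing $(c_{ab})$, polar decomposition, and Kraus rigidity for correction. One small index slip: with $C=U^\dagger DU$ you need $F_k=\sum_a\overline{U_{ka}}\,E_a$ (or, keeping your $F_k$, write $C=U^{T}D\,\overline{U}$) to obtain $PF_k^\dagger F_lP=d_k\delta_{kl}P$. As written, your choice computes $\overline{U}\,C\,U^{T}$, which is not diagonal for a general complex Hermitian $C$.
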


We now specialize Theorem~\ref{thm:KL} to diagonal phase-error families.
By Lemma~\ref{lem:spectral_translation}, such errors act as rigid translations
in the Fourier domain. Consequently, the operator conditions above
translate into a purely additive separation constraint
on the support set $S$.

Let $E_t\subseteq V$ denote the set of vectors of Hamming weight at most $t$, and define
\[
\mathcal{E}_t=\{Z^\omega:\omega\in E_t\}.
\]
\begin{theorem}[Exact Harmonic Non-Collision]
\label{thm:main_detection}
A Fourier-support code $\mathcal{C}(S)$ detects all $t$-local phase errors if and only if
\[
(S-S)\cap E_t=\{0\}.
\]
\end{theorem}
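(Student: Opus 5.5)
We apply the detection half of Theorem~\ref{thm:KL} to the family $\mathcal{E}_t=\{Z^\omega:\omega\in E_t\}$ and compute $P_S Z^\omega P_S$ explicitly using Lemma~\ref{lem:spectral_translation}. Since $Z^\omega|s\rangle_F=|s+\omega\rangle_F$ and the Fourier basis is orthonormal, we get
\[
P_S Z^\omega P_S=\sum_{s\in S,\; s+\omega\in S}|s+\omega\rangle_F\langle s|_F .
\]
In the Fourier basis this operator is a partial permutation matrix. It has a nonzero entry at position $(s+\omega,s)$ exactly when both $s$ and $s+\omega$ lie in $S$. The whole proof then reduces to deciding when such an operator can equal $\lambda_\omega P_S$.

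\emph{Sufficiency.} Assume $(S-S)\cap E_t=\{0\}$ and fix $\omega\in E_t$. If $\omega=0$, then $Z^0=I$ and $P_S Z^0P_S=P_S$, so $\lambda_0=1$. If $\omega\neq 0$, there is no pair $s,s'\in S$ with $s'-s=\omega$, so the sum above is empty and $P_SZ^\omega P_S=0=0\cdot P_S$. Hence the detection condition holds for every error in $\mathcal{E}_t$.

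\emph{Necessity.} Suppose instead that some nonzero $\omega\in E_t$ satisfies $\omega=s'-s$ with $s,s'\in S$. Then $s'\neq s$, and the matrix element satisfies
\[
{}_F\langle s'|P_SZ^\omega P_S|s\rangle_F={}_F\langle s'|s+\omega\rangle_F=1 .
\]
On the other hand, $\lambda P_S$ is diagonal in the Fourier basis, so its $(s',s)$ off-diagonal entry vanishes. Therefore $P_SZ^\omega P_S\neq\lambda P_S$ for every scalar $\lambda$, and detection fails. Finally, $0\in S-S$ trivially whenever $S\neq\emptyset$, and $0\in E_t$, so the intersection always contains $0$. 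This completes the equivalence.

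The only point needing care is necessity. One must argue that a nonzero off-diagonal entry cannot be absorbed by a scalar multiple of $P_S$. This follows directly from diagonality of $P_S$ in the orthonormal Fourier basis, so no real obstacle remains.
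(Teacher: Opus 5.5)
Your proof is correct and follows essentially the same route as the paper: apply the Knill--Laflamme detection condition and use Lemma~\ref{lem:spectral_translation} to compute the Fourier-basis matrix elements of $P_S Z^\omega P_S$. Your necessity step, that a nonzero off-diagonal entry cannot match the diagonal operator $\lambda P_S$, is slightly more careful than the paper's proof, which asserts directly that detection means $P_S Z^\omega P_S = 0$ for $\omega\neq 0$ without explaining why the scalar must vanish.
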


\begin{proof}
By Theorem~\ref{thm:KL} (Detection), $\mathcal{C}(S)$ detects the phase-error family
$\mathcal{E}_t=\{Z^\omega : \omega\in E_t\}$ if and only if
\[
P_S Z^\omega P_S = 0
\quad
\text{for all } \omega\in E_t\setminus\{0\}.
\]

Using Lemma~\ref{lem:spectral_translation}, phase operators act as rigid translations
in the Fourier basis:
\[
Z^\omega |s\rangle_F = |s+\omega\rangle_F.
\]
Hence
\[
\langle s_1|_F Z^\omega |s_2\rangle_F
=
\delta_{s_1,s_2+\omega}.
\]

Therefore $P_S Z^\omega P_S=0$ if and only if no pair $s_1,s_2\in S$
satisfies $s_1 = s_2 + \omega$, i.e.,
\[
\omega \notin S-S.
\]

Thus detection holds precisely when
\[
(S-S)\cap E_t=\{0\}.
\]
\end{proof}

\begin{corollary}[Exact Phase Correction]
\label{cor:correction}
The code $\mathcal{C}(S)$ corrects all $t$-local phase errors if and only if
\[
(S-S)\cap(E_t-E_t)=\{0\}.
\]
In the uniform locality model $E_t-E_t=E_{2t}$, so exact correction is equivalent to
\[
(S-S)\cap E_{2t}=\{0\}.
\]
\end{corollary}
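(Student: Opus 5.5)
The plan is to reduce correction to detection of the product family and then apply Theorem~\ref{thm:main_detection} to the resulting set of translation vectors. By Theorem~\ref{thm:KL} (Exact Correction), $\mathcal{C}(S)$ corrects $\mathcal{E}_t$ if and only if $P_S (Z^{\omega_1})^\dagger Z^{\omega_2} P_S = c_{\omega_1\omega_2} P_S$ for all $\omega_1,\omega_2\in E_t$.

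The first step is to compute the product operators. Since $\psi$ is a character, $(Z^{\omega_1})^\dagger Z^{\omega_2}|x\rangle = \overline{\psi(\omega_1\cdot x)}\psi(\omega_2\cdot x)|x\rangle = \psi((\omega_2-\omega_1)\cdot x)|x\rangle$. Hence $(Z^{\omega_1})^\dagger Z^{\omega_2} = Z^{\omega_2-\omega_1}$, and the product family is exactly $\{Z^\omega : \omega\in E_t-E_t\}$.

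Next I rerun the argument of Theorem~\ref{thm:main_detection} with $E_t$ replaced by the set $D=E_t-E_t$. This works because that proof used only the translation action of Lemma~\ref{lem:spectral_translation} and never the specific shape of $E_t$.
\begin{itemize}
\item For $\omega=0$ the condition holds trivially with constant $1$.
\item For $\omega\neq 0$, Lemma~\ref{lem:spectral_translation} gives $\langle s_1|_F Z^\omega|s_2\rangle_F=\delta_{s_1,s_2+\omega}$. So $P_S Z^\omega P_S$ is a partial shift with zero diagonal; its diagonal vanishes because $\omega\neq0$.
\item A matrix with zero diagonal can equal $\lambda P_S$ only if $\lambda=0$, so the condition forces $P_S Z^\omega P_S=0$. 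This holds exactly when $\omega\notin S-S$.
\end{itemize}
The main point requiring care is this step: justifying that the scalar must vanish for nonzero $\omega$. Theorem~\ref{thm:main_detection} asserted this implicitly. Together these steps show that correction holds if and only if $(S-S)\cap(E_t-E_t)=\{0\}$.

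Finally, for the uniform model I show $E_t-E_t=E_{2t}$. For the inclusion $\subseteq$, use subadditivity of Hamming weight: $\mathrm{wt}(a-b)\le \mathrm{wt}(a)+\mathrm{wt}(b)\le 2t$. For the inclusion $\supseteq$, take $v$ with $\mathrm{wt}(v)\le 2t$ and split its support into two sets of size at most $t$. Let $a$ be the restriction of $v$ to the first set and $b$ the negative of its restriction to the second. Then $a,b\in E_t$ and $a-b=v$. Substituting this identity gives the second form of the corollary.
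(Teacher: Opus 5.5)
Your proposal is correct and follows the paper's proof: Knill--Laflamme turns correction into detection of the product family $\{Z^{\omega_2-\omega_1}\}$, which is indexed by $E_t-E_t$, and Theorem~\ref{thm:main_detection} is then applied to that index set. You also fill in two steps the paper leaves implicit. First, your zero-diagonal argument shows why $P_S Z^\omega P_S=\lambda P_S$ forces $\lambda=0$ for $\omega\neq 0$. Second, your $q$-ary proof of $E_t-E_t=E_{2t}$ (taking $b=-v|_{I_2}$) covers the general case, whereas the paper only asserts it here and proves the binary case later in Lemma~\ref{lem:catq_cayley}.
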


\begin{proof}
By Theorem~\ref{thm:KL} (Exact Correction), $\mathcal{C}(S)$ corrects
$\mathcal{E}_t$ if and only if it detects the product family
$\{Z^{\omega_1}{}^\dagger Z^{\omega_2}\}_{\omega_1,\omega_2\in E_t}$.

Since
\[
Z^{\omega_1}{}^\dagger Z^{\omega_2}
=
Z^{\omega_2-\omega_1},
\]
this product family is indexed by the difference set $E_t - E_t$.
Applying Theorem~\ref{thm:main_detection} to this index set yields
\[
(S-S)\cap(E_t-E_t)=\{0\}.
\]

Under uniform locality $E_t - E_t = E_{2t}$, giving the stated condition.
\end{proof}

The detection and correction properties of phase-local errors are therefore governed entirely by additive difference sets in $V$. The harmonic translation principle thus converts the operator-algebraic Knill--Laflamme conditions into a purely combinatorial non-collision constraint, which will serve as the structural foundation for the capacity results that follow.

\section{Classical Reduction and Transfer Principle}
\label{sec:classical_reduction}

The Exact Harmonic Non-Collision Theorem (Theorem~\ref{thm:main_detection})
reduces phase-local detection to an additive constraint on the spectral support.
In the standard \emph{uniform locality} model, this constraint is exactly
equivalent to the classical Hamming-distance condition for $q$-ary codes.
Consequently, Fourier-support codes inherit classical bounds, constructions,
and decoding complexity essentially verbatim.

\subsection{Equivalence with Hamming Distance}
\label{subsec:hamming_equivalence}

Let $\mathrm{wt}(v)$ denote the Hamming weight of $v\in V=\mathbb{F}_q^n$, and
define the Hamming distance
\[
d_H(s_1,s_2)=\mathrm{wt}(s_1-s_2).
\]
For a nontrivial subset $S\subseteq V$, its minimum (classical) distance is
\[
d(S)
=
\min_{\substack{s_1,s_2\in S\\ s_1\neq s_2}}
\mathrm{wt}(s_1-s_2).
\]
Under uniform locality,
\[
E_t=\{v\in V:\mathrm{wt}(v)\le t\}.
\]

\begin{theorem}[Distance Equivalence]
\label{thm:distance_equivalence}
A Fourier-support code $\mathcal{C}(S)$ detects all $t$-local phase errors if and only if
\[
d(S)\ge t+1.
\]
\end{theorem}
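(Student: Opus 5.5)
The plan is to reduce the statement directly to Theorem~\ref{thm:main_detection}, which already shows that $\mathcal{C}(S)$ detects all $t$-local phase errors if and only if $(S-S)\cap E_t=\{0\}$. It then suffices to prove the purely combinatorial equivalence
\[
(S-S)\cap E_t=\{0\}\quad\Longleftrightarrow\quad d(S)\ge t+1,
\]
with $E_t=\{v\in V:\mathrm{wt}(v)\le t\}$.

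The first step is to describe $S-S$ concretely. Its elements are $s_1-s_2$ with $s_1,s_2\in S$. The zero vector always belongs to both $S-S$ (take $s_1=s_2$) and $E_t$ (since $\mathrm{wt}(0)=0\le t$). Hence the intersection always contains $0$, and the condition $(S-S)\cap E_t=\{0\}$ says exactly that no nonzero difference has weight at most $t$. Because $\mathrm{wt}(v)=0$ holds only for $v=0$, the nonzero differences are precisely $s_1-s_2$ with $s_1\neq s_2$. The condition therefore reads
\[
\mathrm{wt}(s_1-s_2)\ge t+1\quad\text{for all } s_1\neq s_2 \text{ in } S,
\]
and taking the minimum over such pairs gives $d(S)\ge t+1$ by the definition of $d(S)$. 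Each step is an equivalence, so both directions follow at once. In the contrapositive form: a violating pair with $\mathrm{wt}(s_1-s_2)\le t$ is the same thing as a nonzero element of $(S-S)\cap E_t$.

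There is no real obstacle here; the only care needed is bookkeeping of degenerate cases. If $|S|\le 1$, the minimum defining $d(S)$ is over an empty set. I would adopt the convention $d(S)=\infty$ (or note that the paper restricts to nontrivial $S$), and in that case both sides hold trivially, since $S-S\subseteq\{0\}$. I would also remark that the argument uses only the definition of $d_H$ through $\mathrm{wt}(s_1-s_2)$ and never uses any linearity of $S$, which is what lets nonlinear supports be admitted later.
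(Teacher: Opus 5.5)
Your proof is correct and is essentially the paper's argument: reduce to Theorem~\ref{thm:main_detection}, then observe that $(S-S)\cap E_t=\{0\}$ says exactly that $\mathrm{wt}(s_1-s_2)>t$ for all distinct $s_1,s_2\in S$, i.e.\ $d(S)\ge t+1$. The only quibble is in your degenerate-case remark: for $S=\emptyset$ one has $S-S=\emptyset$, so the set condition $(S-S)\cap E_t=\{0\}$ fails literally and is not ``trivially'' satisfied; that case has to be excluded by convention, as the paper does by restricting to nontrivial $S$.
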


\begin{proof}
By Theorem~\ref{thm:main_detection}, detection holds if and only if
\[
(S-S)\cap E_t=\{0\}.
\]
Equivalently, for all distinct $s_1,s_2\in S$ one has $\mathrm{wt}(s_1-s_2)>t$,
which is precisely $d(S)\ge t+1$.
\end{proof}

By Corollary~\ref{cor:correction}, exact correction of $t$-local phase errors is equivalent to
\[
(S-S)\cap(E_t-E_t)=\{0\}.
\]
In the uniform locality model $E_t-E_t=E_{2t}$, so phase correction is equivalent to
\[
d(S)\ge 2t+1.
\]

\begin{corollary}[Exact Phase-Local Capacity under Uniform Locality]
\label{cor:exact_uniform_capacity}
Let $A_q(n,d)$ denote the maximal cardinality of a classical
$q$-ary code of length $n$ and minimum distance $d$.
Under uniform $t$-local phase noise, the maximal logical
dimension of a Fourier-support code satisfies
\[
K_{\max}(n,t)
=
A_q(n,2t+1).
\]
\end{corollary}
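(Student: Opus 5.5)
The plan is to prove the identity by establishing the two inequalities $K_{\max}(n,t)\le A_q(n,2t+1)$ and $K_{\max}(n,t)\ge A_q(n,2t+1)$. Both will follow from a single bijective correspondence: the Fourier-support codes that exactly correct $\mathcal{E}_t$ are in one-to-one correspondence with the classical $q$-ary codes $S\subseteq V$ with $d(S)\ge 2t+1$. This correspondence preserves cardinality, because the logical dimension of $\mathcal{C}(S)$ is $K=|S|$ by definition. Taking the maximum of $|S|$ over either side of the correspondence then gives the stated equality.

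The key input is Corollary~\ref{cor:correction} together with the uniform-locality identity $E_t-E_t=E_{2t}$. I would briefly justify this identity, since the corollary only asserts it.

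\emph{Inclusion $E_t-E_t\subseteq E_{2t}$.} This follows from subadditivity of Hamming weight: $\mathrm{wt}(u-v)\le \mathrm{wt}(u)+\mathrm{wt}(v)$.

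\emph{Inclusion $E_{2t}\subseteq E_t-E_t$.} Take any $w$ with $\mathrm{wt}(w)\le 2t$ and split its support into two disjoint parts, each of size at most $t$. Let $u$ be the restriction of $w$ to the first part, and let $-v$ be the restriction to the second part. Then $u,v\in E_t$ and $w=u-v$.

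With the identity in hand, Corollary~\ref{cor:correction} says that $\mathcal{C}(S)$ corrects $\mathcal{E}_t$ if and only if $(S-S)\cap E_{2t}=\{0\}$. By the same argument as in Theorem~\ref{thm:distance_equivalence}, with $t$ replaced by $2t$, this is equivalent to $d(S)\ge 2t+1$.

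\emph{Upper bound.} Let $\mathcal{C}(S)$ be any Fourier-support code that corrects $\mathcal{E}_t$. Then $S$ is a classical code of length $n$ with minimum distance at least $2t+1$, so $K=|S|\le A_q(n,2t+1)$.

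\emph{Lower bound.} Take an optimal classical code $S^\ast\subseteq \mathbb{F}_q^n=V$ with $|S^\ast|=A_q(n,2t+1)$ and $d(S^\ast)\ge 2t+1$. No linearity is needed. Then $\mathcal{C}(S^\ast)$ corrects $\mathcal{E}_t$ and has $K=A_q(n,2t+1)$.

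There is no serious obstacle; the only points needing care are conventions. First, the definition of $d(S)$ requires $|S|\ge 2$. A singleton $S$ trivially satisfies the correction condition and has $K=1\le A_q(n,2t+1)$, so it is harmless. Second, $A_q(n,d)$ should be read as the maximum over codes with minimum distance \emph{at least} $d$, which matches the condition $d(S)\ge 2t+1$. Third, the maximum is taken over Fourier-support codes, as in the statement, not over arbitrary subspaces. The main step to write out carefully is the support-splitting argument for $E_{2t}\subseteq E_t-E_t$.
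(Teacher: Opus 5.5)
Your proof is correct and follows the paper's argument. Like the paper, you reduce exact correction to $d(S)\ge 2t+1$ via Corollary~\ref{cor:correction} and $E_t-E_t=E_{2t}$, then identify admissible supports with classical codes of minimum distance at least $2t+1$ and maximize $K=|S|$. Your support-splitting proof of $E_{2t}\subseteq E_t-E_t$ over $\mathbb{F}_q$ and your remarks on conventions add detail the paper only asserts; the paper proves that identity only in the binary case, in Lemma~\ref{lem:catq_cayley}.
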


\begin{proof}
By Theorem~\ref{thm:distance_equivalence}, exact correction
of $t$-local phase errors is equivalent to the classical
condition $d(S)\ge 2t+1$.
Thus admissible spectral supports are precisely classical
$q$-ary codes with minimum distance at least $2t+1$.
Maximizing the logical dimension $K=|S|$
is therefore equivalent to maximizing the size
of a classical code with these parameters,
which equals $A_q(n,2t+1)$.
\end{proof}

\subsection{Classical Transfer and Decoding Equivalence}
\label{subsec:classical_transfer}

By Corollary~\ref{cor:exact_uniform_capacity}, the phase-local quantum problem under uniform locality is completely identified with the classical packing problem in $\mathbb{F}_q^n$. We now make explicit the structural consequences of this identification at the level of finite-length bounds, asymptotic rate behavior, and decoding complexity.

\begin{corollary}[Transfer of Classical Finite-Length Bounds]
\label{cor:upper_bounds_transfer}
Under uniform $t$-local phase noise, the maximal logical dimension
$K_{\max}(n,t)$ satisfies all classical finite-length bounds
for $q$-ary codes with minimum distance $2t+1$. In particular:

\begin{itemize}
\item \textbf{Sphere-packing (Hamming) bound:}
\[
K_{\max}(n,t)
\le
\frac{q^n}
{\displaystyle\sum_{i=0}^{t}\binom{n}{i}(q-1)^i}.
\]

\item \textbf{Singleton bound:}
\[
K_{\max}(n,t)
\le
q^{\,n-2t}.
\]
\end{itemize}

Thus classical extremal finite-length bounds transfer verbatim
to phase-local quantum capacity.
\end{corollary}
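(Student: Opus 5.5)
The plan is to route everything through Corollary~\ref{cor:exact_uniform_capacity}, which identifies $K_{\max}(n,t)$ with $A_q(n,2t+1)$. Once that identity is in place, the quantum quantity is literally the classical packing number. Any valid upper bound on $|S|$ for a $q$-ary code $S\subseteq\mathbb{F}_q^n$ with $d(S)\ge 2t+1$ is then automatically a bound on $K_{\max}(n,t)$. So it suffices to prove the two displayed bounds for $A_q(n,2t+1)$. I will give the short self-contained combinatorial arguments rather than only citing them, so that the transfer is visibly verbatim.

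\textbf{Sphere-packing bound.} Let $S$ be an admissible support, i.e. $(S-S)\cap E_{2t}=\{0\}$ by Corollary~\ref{cor:correction}. Consider the translates $s+E_t$ for $s\in S$.

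First, these translates are pairwise disjoint. If $s_1+e_1=s_2+e_2$ with $e_i\in E_t$, then $s_1-s_2=e_2-e_1\in E_t-E_t=E_{2t}$, which forces $s_1=s_2$. This is just the harmonic non-collision condition read as a packing statement.

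Second, each translate has size
\[
|E_t|=\sum_{i=0}^t\binom{n}{i}(q-1)^i,
\]
counting vectors by support size $i$ and nonzero entries on that support.

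Since the translates are disjoint subsets of $V$, we get $|S|\,|E_t|\le q^n$, which is the stated bound.

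\textbf{Singleton bound.} Take an admissible $S$, so $d(S)\ge 2t+1$ by Theorem~\ref{thm:distance_equivalence} and the subsequent remark. Delete the last $2t$ coordinates, which gives the projection map $\pi: V\to\mathbb{F}_q^{\,n-2t}$.

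If $\pi(s_1)=\pi(s_2)$ for distinct $s_1,s_2\in S$, then $s_1-s_2$ is supported on at most $2t$ coordinates. That means $s_1-s_2\in E_{2t}\setminus\{0\}$, contradicting admissibility. Hence $\pi$ is injective on $S$, and $|S|\le q^{n-2t}$.

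We need $n\ge 2t$ for this statement to be meaningful. For $n<2t+1$, admissible supports have $|S|\le 1$, and the bound should be read with that convention.

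\textbf{Main obstacle.} The only delicate point is the identification $E_t-E_t=E_{2t}$ used in both arguments. The inclusion $\subseteq$ is immediate, since $\mathrm{wt}(a-b)\le\mathrm{wt}(a)+\mathrm{wt}(b)$. The reverse inclusion requires splitting a vector of weight at most $2t$ into two pieces of weight at most $t$ each; this is done by partitioning its support. I will state this explicitly, with the caveat that it needs $2t\le n$ for the split to exist in the nontrivial range. All remaining steps are elementary counting.
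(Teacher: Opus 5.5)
Your proposal is correct and follows the paper's route. The paper gives no separate proof and treats the corollary as immediate from $K_{\max}(n,t)=A_q(n,2t+1)$ (Corollary~\ref{cor:exact_uniform_capacity}) together with the classical bounds; you additionally reprove those bounds directly, using disjoint translates $s+E_t$ and coordinate deletion. Your caveat is well taken, since the Singleton form $q^{n-2t}$ genuinely fails for $n<2t$, where $K_{\max}=1$; however, the splitting $E_{2t}\subseteq E_t-E_t$ needs no hypothesis on $n$.
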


\begin{corollary}[Asymptotic Rate Transfer]
\label{cor:asymptotic_transfer}
Let $\delta=\frac{2t}{n}$ denote the relative distance.
Then all classical asymptotic rate bounds for $q$-ary codes
with relative distance $\delta$ transfer directly to
Fourier-support codes. In particular:

\begin{itemize}
\item \textbf{Gilbert--Varshamov bound:}
\[
R \ge 1-H_q(\delta).
\]

\item \textbf{Asymptotic Hamming upper bound:}
\[
R \le 1-H_q\!\left(\frac{\delta}{2}\right).
\]
\end{itemize}

Hence the asymptotic rate function of phase-local quantum
capacity coincides exactly with that of classical $q$-ary coding.
\end{corollary}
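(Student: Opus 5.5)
The statement to prove is Corollary~\ref{cor:asymptotic_transfer}. The plan is to reduce everything to the exact identity of Corollary~\ref{cor:exact_uniform_capacity}, $K_{\max}(n,t)=A_q(n,2t+1)$. We set the rate $R=\frac{1}{n}\log_q K_{\max}(n,t)$ and let $n\to\infty$ with $t/n$ fixed. The quantum rate function is then literally the classical function $\frac1n\log_q A_q(n,d)$ evaluated at $d=2t+1$. Any asymptotic bound valid for the classical sequence therefore holds verbatim, provided we track the relative distance correctly. Since $d/n=(2t+1)/n\to 2t/n=\delta$, the offset of $1$ is $o(n)$. This is the only bookkeeping needed to match parameters.

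\textbf{Gilbert--Varshamov lower bound.} I would reprove it directly on the spectral side, so the argument is visibly about supports $S$. Greedily add points of $V$ to $S$ while keeping $(S-S)\cap E_{2t}=\{0\}$, which is the criterion of Corollary~\ref{cor:correction}. Each chosen point excludes a translate $s+E_{2t}$. The greedy process stops only when these translates cover $V$, so
\[
|S|\ge q^n/|E_{2t}|,\qquad |E_{2t}|=\sum_{i=0}^{2t}\binom{n}{i}(q-1)^i .
\]
The standard entropy estimate gives $|E_{\delta n}|\le q^{nH_q(\delta)}$ for $\delta\le 1-1/q$. This yields $R\ge 1-H_q(\delta)-o(1)$.

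\textbf{Asymptotic Hamming upper bound.} We use the sphere-packing bound of Corollary~\ref{cor:upper_bounds_transfer}, i.e.\ the translates $s+E_t$, $s\in S$, are disjoint. Combined with the lower entropy estimate $|E_{\lambda n}|\ge q^{nH_q(\lambda)-o(n)}$ and $t/n=\delta/2$, this gives $R\le 1-H_q(\delta/2)+o(1)$.

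\textbf{Main obstacle.} The delicate point is the precise meaning of ``all classical asymptotic bounds transfer'' and the regime of validity. The entropy estimates hold only for $\delta\le 1-1/q$. Also, the bounds hold as $\liminf/\limsup$ statements, not pointwise in $n$. Both bounds therefore need to be stated as limits with $o(1)$ corrections. For the general claim, I would argue abstractly: any classical statement of the form $\limsup$ or $\liminf$ of $\frac1n\log_q A_q(n,\lceil\delta n\rceil)$ is invariant under $d\mapsto d+O(1)$. This uses the standard continuity of the asymptotic rate in $\delta$, in particular monotonicity of $A_q$ in $d$. The invariance then transfers every classical asymptotic bound through the identity $K_{\max}=A_q(n,2t+1)$.
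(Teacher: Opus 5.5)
Your proposal is correct and follows the same route as the paper. The paper gives no separate proof of this corollary and treats it as an immediate consequence of the identity $K_{\max}(n,t)=A_q(n,2t+1)$ from Corollary~\ref{cor:exact_uniform_capacity}, exactly the reduction you make. Your explicit greedy-covering and sphere-packing derivations on the spectral side, together with the $o(1)$ corrections and the restriction $\delta\le 1-1/q$ (without which the Gilbert--Varshamov inequality as stated would fail), make precise what the paper leaves implicit.
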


\begin{proposition}[Decoding Equivalence]
\label{prop:decoding_equivalence}
Under uniform locality, phase recovery reduces exactly to classical maximum-likelihood decoding of the support set $S$.
\end{proposition}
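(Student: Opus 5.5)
We work entirely in the Fourier frame, conjugating by $\mathcal{F}_V$. By \eqref{eq:conjugation}, every phase error $Z^\omega$ with $\omega\in E_t$ becomes the translation $T^\omega$, and the code $\mathcal{C}(S)$ becomes $\mathrm{span}\{|s\rangle : s\in S\}$. A corrupted codeword $\sum_{s\in S}\alpha_s|s\rangle_F$ is therefore mapped to $\sum_{s\in S}\alpha_s|s+\omega\rangle_F$, which is supported on the translate $S+\omega$.

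The recovery plan has three steps, and we would present them as the exact quantum lift of classical decoding.

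\textbf{Step 1: show the translates are disjoint.} The translates $S+\omega$, $\omega\in E_t$, are pairwise disjoint up to coincident errors. Suppose $s_1+\omega_1=s_2+\omega_2$. Then $s_1-s_2\in E_t-E_t=E_{2t}$. Corollary~\ref{cor:correction}, in the form $d(S)\ge 2t+1$, then forces $s_1=s_2$ and hence $\omega_1=\omega_2$. So each Fourier index $y\in\bigcup_{\omega}(S+\omega)$ has a unique decomposition $y=s+\omega$. The map $D:y\mapsto\omega(y)$ is exactly the classical minimum-distance (maximum-likelihood, for i.i.d.\ symmetric noise with $t$ below half the distance) decoder of $S$: $D(y)$ is the difference between $y$ and its unique nearest codeword within radius $t$.

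\textbf{Step 2: build the syndrome measurement.} We define the projective measurement $\{\Pi_\omega\}_{\omega\in E_t}$, where
\[
\Pi_\omega=\sum_{s\in S}|s+\omega\rangle_F\langle s+\omega|_F ,
\]
completed by the projector onto the remaining indices. Disjointness makes these orthogonal projectors.

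\textbf{Step 3: recover and check.} On outcome $\omega$ we apply $Z^{-\omega}$, which by Lemma~\ref{lem:spectral_translation} shifts the support back to $S$. We then verify that the corrupted state lies entirely in $\mathrm{range}(\Pi_\omega)$, so the measurement is non-disturbing and the amplitudes $\alpha_s$ are preserved exactly. This gives exact recovery.

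Conversely, we argue that any recovery must determine $\omega$ from the support index $y$. This means quantum phase recovery is equivalent to computing $D$, namely classical decoding of $S$ on the index $y$. It can also be implemented coherently as the unitary $|y\rangle\mapsto|y-D(y)\rangle$ with the syndrome written to an ancilla. The complexity of the quantum procedure therefore equals that of classical decoding, up to two QFTs.

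The main obstacle is making ``reduces exactly to maximum-likelihood decoding'' precise. Minimum-distance decoding coincides with ML only under a noise model in which lower-weight errors are likelier. For structured priors, we would instead state that the optimal quantum recovery measures the coset index $y$ and applies the classical ML estimate $\arg\max_{\omega}\Pr(\omega)$ over decompositions $y=s+\omega$. This works because, within the correctable set, the decomposition is unique and the amplitudes never interfere: the Fourier-basis measurement commutes with all errors, so no quantum degeneracy arises. We would make this explicit by observing that all $Z^\omega$ are diagonal in the same basis as $\Pi_\omega$ after conjugation.
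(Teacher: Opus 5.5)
Your Steps 1--3 are correct and follow the paper's route. Phase errors become translations of the Fourier index, $d(S)\ge 2t+1$ makes the decomposition $y=s+\omega$ unique, and the syndrome map is nearest-codeword decoding of $S$. They are more careful than the paper's one-line proof, which treats the index $v=s+\omega$ as if it were observable. Your coarse projectors $\Pi_\omega$ extract only the translate label and so leave the amplitudes $\alpha_s$ intact. Your remark that minimum-distance decoding is ML only under a weight-monotone prior is also a fair qualification the paper omits.

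The justification in your final paragraph, however, is false and should be dropped. $Z^\omega$ is diagonal in the computational basis and $\Pi_\omega$ in the Fourier basis; after conjugation $Z^\omega$ becomes the translation $T^\omega$ of \eqref{eq:conjugation}, a permutation rather than a diagonal matrix. Concretely, $Z^{\omega}\Pi_{\omega'}Z^{-\omega}$ is the projector onto $\mathrm{span}\{|s+\omega'+\omega\rangle_F : s\in S\}$. So phase errors \emph{permute} the syndrome subspaces instead of commuting with them. They must: an error commuting with the syndrome measurement leaves its statistics unchanged and could not be revealed by it.

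Likewise, a genuine Fourier-basis measurement of $y$ would collapse $\sum_s\alpha_s|s+\omega\rangle_F$ onto a single term and destroy the logical state. So ``measuring the coset index $y$'' can only mean the coarse measurement $\{\Pi_\omega\}$ of Step~2. The correct reason there is no interference or degeneracy is your own Step~1: the translates $S+\omega$, $\omega\in E_t$, are pairwise disjoint, so the Knill--Laflamme matrix is $c_{ab}=\delta_{ab}$.

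Finally, within the correctable set your ``ML over decompositions'' is vacuous, since the decomposition is unique. Beyond it, for nonlinear $S$, the error estimate $y-\hat s(y)$ can depend on $s$ and would entangle the syndrome register with the logical data. The exact equivalence you can actually claim is therefore with bounded-distance decoding up to radius $t$.
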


\begin{proof}
In the Fourier basis a phase error $Z^\omega$ with
$\mathrm{wt}(\omega)\le t$ acts as a translation
$|s\rangle_F \mapsto |s+\omega\rangle_F$.
Given $v=s+\omega$, recovery amounts to finding
$s\in S$ minimizing $d_H(v,s)$, which is precisely
classical maximum-likelihood decoding of the code $S$.
\end{proof}

\section{Beyond Affine Supports: Dimensional Separation}
\label{sec:dimensional_separation}

Section~\ref{sec:classical_reduction} established that phase-local protection
is completely characterized by classical Hamming distance on the spectral
support $S \subseteq \mathbb{F}_q^n$. 
However, the harmonic framework does not impose any algebraic closure
property on $S$. In particular, $S$ need not be a linear or affine subspace.
We now examine the structural consequences of removing this restriction.

\subsection{Additive (Affine) Spectral Supports}
\label{subsec:affine_supports}

We first isolate the subclass of Fourier-support codes whose spectral supports
possess affine structure.

\begin{definition}[Additive (Affine) Support]
A Fourier-support code $\mathcal{C}(S)$ is called \emph{additive}
if $S \subseteq V=\mathbb{F}_q^n$ is an affine subspace.
\end{definition}

If $S$ is affine of dimension $k$, then
\[
S = v + U,
\]
where $U \le V$ is a $k$-dimensional linear subspace.
In this case,
\[
S - S = U.
\]

By Corollary~\ref{cor:correction}, exact correction of $t$-local phase errors
is equivalent to
\[
U \cap E_{2t} = \{0\},
\]
which is precisely the condition that $U$ defines a classical linear
$[n,k,d]_q$ code with $d \ge 2t+1$.

Since every $k$-dimensional affine subspace over $\mathbb{F}_q$ has cardinality
$|S| = q^k$, additive Fourier-support codes necessarily have logical dimension
\[
K = q^k.
\]

Thus optimizing additive spectral supports is equivalent to optimizing
classical linear codes under identical distance constraints.

\subsection{Strict Separation Theorem}
\label{subsec:strict_separation}

Let $A_q(n,d)$ denote the maximal size of a (not necessarily linear)
$q$-ary code of length $n$ and minimum distance $d$, and let
$B_q(n,d)$ denote the maximal size among linear codes with the same
parameters.

By Section~\ref{sec:classical_reduction}, unrestricted Fourier-support
codes achieve logical dimension
\[
K_{\max}^{\mathrm{unrestricted}} = A_q(n,2t+1),
\]
while additive supports are limited to
\[
K_{\max}^{\mathrm{affine}} = B_q(n,2t+1).
\]

\begin{theorem}[Strict Dimensional Separation]
\label{thm:strict_separation}
If for some $(n,d)$ one has
\[
A_q(n,d) > B_q(n,d),
\]
then there exists a Fourier-support code correcting
$t = \lfloor (d-1)/2 \rfloor$ phase errors whose logical dimension
strictly exceeds that of any additive (affine) Fourier-support code
with the same parameters.
\end{theorem}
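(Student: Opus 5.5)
The plan is to take a classical code witnessing $A_q(n,d)>B_q(n,d)$ as the spectral support, and then compare it with every affine support using the affine analysis of Section~\ref{subsec:affine_supports}.

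\textbf{Construction.} Fix $(n,d)$ with $A_q(n,d)>B_q(n,d)$ and set $t=\lfloor (d-1)/2\rfloor$. Choose a (necessarily nonlinear) $q$-ary code $S^\ast\subseteq V$ with $|S^\ast|=A_q(n,d)$ and $d(S^\ast)\ge d$. Since $2t+1\le d$, we have $d(S^\ast)\ge 2t+1$. Equivalently, every nonzero difference in $S^\ast-S^\ast$ has weight greater than $2t$, so $(S^\ast-S^\ast)\cap E_{2t}=\{0\}$. By Corollary~\ref{cor:correction}, using $E_t-E_t=E_{2t}$, the code $\mathcal{C}(S^\ast)$ exactly corrects all $t$-local phase errors, and its logical dimension is $A_q(n,d)$.

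\textbf{Affine comparison.} Let $\mathcal{C}(S)$ be an additive Fourier-support code with $S=v+U$ and with the same parameters, meaning $d(S)\ge d$. Since $S-S=U$, the subspace $U$ is a linear $[n,k,\ge d]_q$ code. Hence $|S|=q^k=|U|\le B_q(n,d)<A_q(n,d)=|S^\ast|$, which gives the strict separation.

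\textbf{Main obstacle.} The delicate step is deciding what ``the same parameters'' means. If we only require an affine code to correct $t$ errors, i.e.\ to have distance at least $2t+1$, then the relevant affine bound is $B_q(n,2t+1)$. This can exceed $B_q(n,d)$ when $d$ is even, since then $2t+1=d-1$. I would handle this in one of two ways. The first is to read ``same parameters'' as the same classical distance $d$, as the affine analysis preceding the theorem does. The second is to note that when $d$ is odd, $2t+1=d$ and the two bounds coincide. For even $d$, I would reduce to the odd case using the standard puncturing/extension correspondence $A_q(n,2t+2)=A_q(n-1,2t+1)$, which holds for $q=2$. Failing that, I would state the comparison at distance $d$ explicitly. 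I would present the proof at distance $d$ and remark on the odd case, so that the result applies directly to $K^{\mathrm{affine}}_{\max}=B_q(n,2t+1)$.
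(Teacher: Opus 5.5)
Your argument is correct and is essentially the paper's proof: you take a code of size $A_q(n,d)$ as the spectral support, and you bound every affine support $v+U$ by $B_q(n,d)$ using $S-S=U$. The paper simply writes $d=2t+1$, implicitly taking $d$ odd, so reading ``same parameters'' as the same distance $d$ is the reading under which both your argument and the paper's go through. Drop the puncturing fallback, though. The identity $A_2(n,2t+2)=A_2(n-1,2t+1)$ and its linear analogue only transfer the separation to length $n-1$. At length $n$ the witness can lose its advantage over affine $t$-correcting supports: for $(n,d)=(16,6)$, shortening the $[17,9,5]$ quadratic-residue code gives $B_2(16,5)\ge 256=A_2(16,6)$.
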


\begin{proof}
By Theorem~\ref{thm:distance_equivalence}, correction of $t$ phase errors
is equivalent to the classical condition $d(S)\ge d$ with $d=2t+1$.

Any affine support corresponds to a linear code of size at most
$B_q(n,d)$, hence
\[
K_{\max}^{\mathrm{affine}} \le B_q(n,d).
\]

On the other hand, selecting a classical code $S \subseteq \mathbb{F}_q^n$
with $|S|=A_q(n,d)$ yields a valid Fourier-support code of dimension
\[
K = A_q(n,d).
\]

If $A_q(n,d) > B_q(n,d)$, strict inequality follows.
\end{proof}

The separation is therefore entirely inherited from classical coding
theory: whenever non-linear codes outperform linear ones,
the harmonic framework admits strictly larger logical dimension
than any additive construction.

\subsection{Explicit Nonlinear Constructions}
\label{subsec:nonlinear_constructions}

We illustrate the separation phenomenon with two well-known
binary examples.

\paragraph{Example 1: Length $8$, Distance $3$.}

Consider the parameters $n=8$ and $d=3$ (corresponding to single phase-error correction).
It is well-known that the largest binary linear $[8,k,3]$ code has dimension $k=4$,
yielding a maximum cardinality
\[
B_2(8,3)=2^{4}=16.
\]

However, the linear bound can be improved by considering non-linear constructions, as in \cite{Nadler1962}. Specifically, it is established that 
\[
A_2(8,3) = 20,
\]
a value realized by the non-linear Julin code \cite{Julin1965}. Detailed treatments of these constructions and the proof of the identity $A_2(9,4) =A_2(8,3)= 20$ are provided in \cite[Example 2.6.5]{HuffmanPless}.

Using such a set $S$ as spectral support yields a Fourier-support code
with logical dimension
\[
K = 20,
\]
whereas any additive support is bounded by $16$.

\paragraph{Example 2: Nordstrom--Robinson Code.}

The Nordstrom--Robinson code has parameters $(16,256,6)$ \cite{NordstromRobinson1967}.
For $t=2$ phase-error correction the harmonic condition requires
$d(S)\ge 5$; the Nordstrom--Robinson code satisfies this with
additional margin ($d=6$), and the relevant linear baseline is
$B_2(16,6)=128$, the maximum size of a binary linear code with
minimum distance at least~$6$.
Using this code as spectral support yields a Fourier-support code
of logical dimension $K=256$, while any affine support is limited
to $128$.

\medskip

These examples demonstrate that linearity imposes a genuine structural
constraint absent in the harmonic framework. Under pure phase noise,
removing the affine restriction permits strictly larger logical
dimension in parameter regimes where classical non-linear codes
outperform linear ones.

\subsection{Asymptotic Nonlinear Families: A Kerdock-Type Construction}

To show that the separation phenomenon of Section~4 persists
asymptotically, we exhibit an explicit infinite family of nonlinear
spectral supports derived from the classical binary Kerdock family.

Let $m\ge 2$ be even and set $G=\mathbb{F}_2^m$.
Identifying $\mathbb{F}_2^{2^m}$ with Boolean functions
$f:G\to\mathbb{F}_2$, Hamming distance corresponds to
$d_H(f,g)=\mathrm{wt}(f+g)$.

Classical Kerdock constructions provide a family
$\mathcal{Q}$ of $2^{m-1}$ nondegenerate quadratic forms on $G$
such that $Q+Q'$ is nondegenerate for distinct $Q,Q'\in\mathcal{Q}$.
Define
\[
S_K=
\{\, Q+\ell_a+b
\;:\;
Q\in\mathcal{Q},\;
a\in G,\;
b\in\mathbb{F}_2
\}.
\]

\begin{theorem}[Kerdock-type spectral support]
For even $m$,
\[
|S_K|=2^{2m},
\qquad
d(S_K)=2^{m-1}-2^{\frac{m}{2}-1}.
\]
\end{theorem}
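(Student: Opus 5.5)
The plan is to compute the cardinality and the minimum distance directly from the structure of quadratic forms over $\mathbb{F}_2$. Throughout, I read ``nondegenerate'' for a quadratic form $Q$ on $G=\mathbb{F}_2^m$ in the sense that its associated alternating form $B_Q(x,y)=Q(x+y)+Q(x)+Q(y)$ is nondegenerate, which requires $m$ even. By Theorem~\ref{thm:distance_equivalence}, the distance statement then governs the phase-error protection of the Fourier-support code $\mathcal{C}(S_K)$.

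\textbf{Cardinality.} The parametrisation $(Q,a,b)$ ranges over $2^{m-1}\cdot 2^m\cdot 2=2^{2m}$ triples, so it suffices to show it is injective. Suppose $Q+\ell_a+b=Q'+\ell_{a'}+b'$. Then $Q+Q'$ is an affine function, so its alternating form $B_{Q+Q'}=B_Q+B_{Q'}$ vanishes identically. If $Q\neq Q'$, the Kerdock hypothesis says $Q+Q'$ is nondegenerate, so $B_{Q+Q'}\neq 0$ because $m\ge 2$; this is a contradiction. Hence $Q=Q'$, and then $\ell_a+b=\ell_{a'}+b'$ as functions on $G$. Evaluating at $0$ gives $b=b'$, and linear functionals determine their index, so $a=a'$.

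\textbf{Distance.} The difference of two distinct codewords has the form $R+\ell_c+e$ with $R=Q+Q'$, and I split into two cases.

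In the first case $Q=Q'$, so the difference is a nonzero affine function. Its weight is $2^{m-1}$ when $c\neq 0$, and $2^m$ when $c=0,e=1$. Both weights exceed $2^{m-1}-2^{m/2-1}$.

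In the second case $Q\neq Q'$, so $f=R+\ell_c+e$ is a quadratic function whose alternating form $B_R$ is nondegenerate. The key step is the character-sum identity
\[
W_f:=\sum_{x\in G}(-1)^{f(x)},\qquad W_f^2=\sum_{x,y}(-1)^{f(x)+f(x+y)}=\sum_{y}(-1)^{f(y)+f(0)}\sum_{x}(-1)^{B_R(x,y)}=2^m,
\]
where the second equality uses $f(x+y)+f(x)=f(y)+f(0)+B_R(x,y)$. The inner sum over $x$ vanishes unless $y$ lies in the radical of $B_R$, and that radical is $\{0\}$. Hence $W_f=\pm 2^{m/2}$. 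Since $\mathrm{wt}(f)=(2^m-W_f)/2$, it follows that $\mathrm{wt}(f)\in\{2^{m-1}\mp 2^{m/2-1}\}$. This gives $d(S_K)\ge 2^{m-1}-2^{m/2-1}$.

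\textbf{Attaining the bound.} To show equality, fix any $Q\neq Q'$ in $\mathcal{Q}$; this is possible because $|\mathcal{Q}|=2^{m-1}\ge 2$. Replacing $e$ by $e+1$ flips the sign of $W_f$, so exactly one choice of $e$ gives $W_f=+2^{m/2}$ and hence weight $2^{m-1}-2^{m/2-1}$. The corresponding pair $Q+\ell_0+0$ and $Q'+\ell_c+e$ lies in $S_K$.

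I expect the main obstacle to be the character-sum computation. The rest is bookkeeping, but this step requires the radical argument above together with a precise convention for nondegeneracy in characteristic $2$. If the intended convention is instead ``the quadratic form has full rank'', I would first reduce to the alternating-form statement through the standard classification of quadratic forms over $\mathbb{F}_2$ (Dickson's theorem, as in \cite{MacWilliamsSloane}).
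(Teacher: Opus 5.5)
The paper gives no proof of this theorem. It is stated right after the definition of $S_K$ and simply imports the classical parameters of the binary Kerdock code. Your proposal is correct and supplies the missing argument in the standard way:
\begin{itemize}
\item injectivity of $(Q,a,b)\mapsto Q+\ell_a+b$, because an affine $Q+Q'$ has zero polar form;
\item the lower bound on distance, via the Walsh-sum identity $W_f^2=2^m$, so $R+\ell_c+e$ is bent whenever $B_R$ is nonsingular;
\item attainment, via the sign flip under $e\mapsto e+1$, which shows the weight $2^{m-1}-2^{m/2-1}$ occurs for some pair in $S_K$.
\end{itemize}

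Compared with the paper's silent appeal to the literature, your route is self-contained. It also uses only the pairwise hypothesis that $Q+Q'$ is nondegenerate for distinct $Q,Q'\in\mathcal{Q}$, never that each $Q$ is nondegenerate on its own. That matters, because the paper's hypothesis taken literally cannot be satisfied. If every $B_Q$ and every $B_Q+B_{Q'}$ were nonsingular, then $\{0\}\cup\{B_Q : Q\in\mathcal{Q}\}$ would be $2^{m-1}+1$ alternating matrices with nonsingular pairwise differences. But two alternating $m\times m$ matrices over $\mathbb{F}_2$ with the same first row have a singular difference, and there are only $2^{m-1}$ possible first rows. The classical Kerdock set in fact contains the zero form, whose coset is $\mathrm{RM}(1,m)$ itself, and your argument applies to it verbatim.

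Your closing worry about conventions is also unnecessary for even $m$. If the polar form of $Q$ has a nonzero radical $N$, then $Q$ is additive, hence linear, on $N$. Since $\dim N$ is even and positive, $Q$ vanishes at some nonzero vector of $N$. So "nondegenerate quadratic form" and "nonsingular polar form" coincide here.
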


Thus $S_K$ is a nonlinear binary code of length $2^m$
whose cardinality grows quadratically in $2^m$.

\paragraph{Implication for phase-local quantum codes.}
By Theorem~3.1, the Fourier-support code $C(S_K)$
corrects all $t$-local phase errors whenever
\[
2t+1\le 2^{m-1}-2^{\frac{m}{2}-1}.
\]
Setting
\[
t=\left\lfloor
\frac{2^{m-1}-2^{\frac{m}{2}-1}-1}{2}
\right\rfloor,
\]
we obtain a family of phase-local quantum codes with logical dimension
\[
K=|S_K|=2^{2m}.
\]

\paragraph{Comparison with linear constructions.}
For comparison, the first-order Reed--Muller code
$\mathrm{RM}(1,m)$ has parameters
\[
[2^m,\; m+1,\; 2^{m-1}],
\]
and contains only $2^{m+1}$ codewords.
Hence the nonlinear family above achieves exponentially larger
logical dimension than any affine construction with comparable
phase distance, confirming that the separation is structural
and persists asymptotically.

\section{Structured Noise and Graph-Theoretic Reformulation}
\label{sec:graph_reformulation}

The uniform Hamming model considered in previous sections assumes that all
error vectors of weight at most $t$ are admissible. 
In realistic architectures, phase noise may instead be constrained to
structured or correlated subsets of $V=\mathbb{F}_q^n$.
The harmonic non-collision principle extends verbatim to this general setting,
leading to a graph-theoretic formulation of code design.

\subsection{Arbitrary Phase Error Families}
\label{subsec:arbitrary_phase}

Let $\Omega \subseteq V$ be an arbitrary subset describing admissible
phase-error vectors, and define
\[
\mathcal{E}_\Omega
=
\{ Z^\omega : \omega \in \Omega \}.
\]

\begin{theorem}[Generalized Phase Correction Condition]
\label{thm:general_phase_condition}
A Fourier-support code $\mathcal{C}(S)$ exactly corrects the error family
$\mathcal{E}_\Omega$ if and only if
\[
(S-S)\cap(\Omega-\Omega)=\{0\}.
\]
\end{theorem}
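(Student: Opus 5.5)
The plan is to repeat the argument of Corollary~\ref{cor:correction} with $E_t$ replaced by an arbitrary finite family $\Omega\subseteq V$. First I apply Theorem~\ref{thm:KL} (Exact Correction). It says $\mathcal{C}(S)$ corrects $\mathcal{E}_\Omega$ if and only if, for all $\omega_1,\omega_2\in\Omega$,
\[
P_S\,(Z^{\omega_1})^\dagger Z^{\omega_2}\,P_S=c_{\omega_1\omega_2}P_S .
\]
Since $\psi$ is a character and $\psi(a)^{-1}=\overline{\psi(a)}=\psi(-a)$, we have $(Z^{\omega_1})^\dagger=Z^{-\omega_1}$. The operators $Z^\omega$ are diagonal, so they commute and compose additively. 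Hence $(Z^{\omega_1})^\dagger Z^{\omega_2}=Z^{\omega_2-\omega_1}$, and the product family is exactly $\{Z^{\delta}:\delta\in\Omega-\Omega\}$.

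Next I analyse $P_S Z^\delta P_S$ in the Fourier basis using Lemma~\ref{lem:spectral_translation}. The matrix entries are
\[
\langle s_1|_F Z^\delta|s_2\rangle_F=\delta_{s_1,s_2+\delta}, \qquad s_1,s_2\in S.
\]
If $\delta=0$, then $P_SZ^0P_S=P_S$, so the condition holds with constant $1$. If $\delta\neq0$, the matrix of $P_SZ^\delta P_S$ restricted to $\mathcal{C}(S)$ has zero diagonal, because $s_2+\delta\neq s_2$. So if it equals $cP_S$, taking the trace gives $c|S|=0$, hence $c=0$ (for $S\neq\emptyset$). Then $P_SZ^\delta P_S=0$ holds iff no pair $s_1,s_2\in S$ satisfies $s_1-s_2=\delta$, i.e.\ iff $\delta\notin S-S$.

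Finally I combine the two steps. Correction holds iff every nonzero $\delta\in\Omega-\Omega$ lies outside $S-S$. Since $0$ belongs to both sets whenever $S$ and $\Omega$ are nonempty, this is equivalent to $(S-S)\cap(\Omega-\Omega)=\{0\}$.

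The only subtle point is the scalar step: $P_SE P_S=cP_S$ must force $c=0$ for nontrivial translations. This is where Theorem~\ref{thm:main_detection} was slightly terse, and the trace (zero-diagonal) argument above handles it. I would also note the convention that $\Omega$ is nonempty, since the empty family is corrected trivially.
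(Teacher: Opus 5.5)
Your proposal is correct and follows essentially the same route as the paper. Both apply Knill--Laflamme, collapse the product family to $\{Z^{\delta}:\delta\in\Omega-\Omega\}$, and use spectral translation to reduce to $(S+\delta)\cap S=\varnothing$ for every nonzero $\delta$. Your zero-diagonal trace argument, which forces the Knill--Laflamme constant to vanish, makes explicit a step the paper's proof skips when it simply asserts that ``orthogonality holds precisely when'' this disjointness holds. Your nonemptiness convention for $S$ and $\Omega$ is likewise needed for the exact equality with $\{0\}$, which the paper leaves implicit.
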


\begin{proof}
By the Knill--Laflamme criterion, exact correction requires
\[
P_S (Z^{\omega_1})^\dagger Z^{\omega_2} P_S \propto P_S
\quad
\forall \omega_1,\omega_2 \in \Omega.
\]
Since
\[
(Z^{\omega_1})^\dagger Z^{\omega_2} = Z^{\omega_2-\omega_1},
\]
and Lemma~\ref{lem:spectral_translation} implies that $Z^v$ acts as
translation by $v$ in the Fourier domain, orthogonality holds precisely
when
\[
(S+v)\cap S = \varnothing
\quad
\forall v \in (\Omega-\Omega)\setminus\{0\}.
\]
This is equivalent to
\[
(S-S)\cap(\Omega-\Omega)=\{0\}.
\]
\end{proof}

Thus phase-error correction under arbitrary structured noise reduces to a
purely additive separation condition between two difference sets.

\subsection{Cayley Graph Formulation}
\label{subsec:cayley_formulation}

Define
\[
D_\Omega = (\Omega-\Omega)\setminus\{0\}.
\]
The associated Cayley graph is
\[
\Gamma_\Omega
=
\mathrm{Cay}(V,D_\Omega),
\]
whose vertex set is $V$ and where distinct vertices $x,y$ are adjacent
if and only if $x-y \in D_\Omega$.

\begin{corollary}[Independence Characterization]
\label{cor:independence_characterization}
A subset $S \subseteq V$ satisfies
\[
(S-S)\cap(\Omega-\Omega)=\{0\}
\]
if and only if $S$ is an independent set in $\Gamma_\Omega$.
\end{corollary}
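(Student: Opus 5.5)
The plan is to unwind both sides of the equivalence into statements about pairs of distinct elements of $S$, and check that they coincide. By the definition of $\Gamma_\Omega=\mathrm{Cay}(V,D_\Omega)$, the set $S$ is independent precisely when no two distinct $s_1,s_2\in S$ are adjacent, that is, when $s_1-s_2\notin D_\Omega$ for all $s_1\neq s_2$ in $S$.

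Before comparing, I will record two routine symmetry facts. First, $0\in S-S$ whenever $S\neq\varnothing$, and $0\in\Omega-\Omega$ whenever $\Omega\neq\varnothing$; this is why the condition in the statement has intersection $\{0\}$ rather than $\varnothing$. Second, $D_\Omega=(\Omega-\Omega)\setminus\{0\}$ is closed under negation, since $-(\omega_1-\omega_2)=\omega_2-\omega_1$. Hence adjacency $x\sim y$ is a symmetric relation and $\Gamma_\Omega$ is a well-defined simple undirected graph without loops. For the degenerate case $S=\varnothing$, the condition should be read as $(S-S)\cap(\Omega-\Omega)\subseteq\{0\}$, which holds trivially, and the empty set is independent, so this case needs no further work.

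For the main step, note that the nonzero elements of $S-S$ are exactly the differences $s_1-s_2$ with $s_1\neq s_2$ in $S$. So the condition $(S-S)\cap(\Omega-\Omega)=\{0\}$ says that no such difference lies in $\Omega-\Omega$. Since these differences are nonzero, this is the same as saying none lies in $(\Omega-\Omega)\setminus\{0\}=D_\Omega$. That is exactly the non-adjacency of every pair of distinct vertices of $S$, i.e. independence of $S$. The two directions of the equivalence are the same chain of equivalences read forwards and backwards.

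I do not expect any genuine obstacle here. The only point requiring care is bookkeeping of the zero element: it is always present on both sides and is excluded from $D_\Omega$, and this is precisely what matches the condition "intersection equals $\{0\}$" with "no edges inside $S$." Combined with Theorem~\ref{thm:general_phase_condition}, this gives the identity $K_{\max}(\Omega)=\alpha(\Gamma_\Omega)$ stated in the introduction.
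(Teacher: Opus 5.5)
Your proof is correct and follows the same route as the paper. Both arguments observe that the nonzero elements of $S-S$ are exactly the differences $s_1-s_2$ of distinct elements, so the condition says none of them lies in $D_\Omega=(\Omega-\Omega)\setminus\{0\}$, which is precisely non-adjacency in $\Gamma_\Omega$. Your extra bookkeeping is a harmless refinement the paper omits: the symmetry of $D_\Omega$, and reading the condition as $\subseteq\{0\}$ in degenerate empty cases (which covers $\Omega=\varnothing$ as well as $S=\varnothing$).
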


\begin{proof}
Two distinct elements $s_1,s_2\in S$ violate the correction condition
precisely when
\[
s_1-s_2 \in D_\Omega,
\]
which is exactly the adjacency relation in $\Gamma_\Omega$.
Thus admissible supports are independent sets.
\end{proof}

\begin{corollary}[Exact Capacity Identity]
\label{cor:exact_capacity_identity}
The maximal logical dimension achievable under exact correction
of the structured phase-error family $E_\Omega$ equals
\[
K_{\max}(\Omega)
=
\alpha(\Gamma_\Omega),
\]
the independence number of the additive Cayley graph
$\Gamma_\Omega$.
\end{corollary}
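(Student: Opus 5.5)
The identity follows by combining Theorem~\ref{thm:general_phase_condition} with Corollary~\ref{cor:independence_characterization}, and then checking that the maximization on both sides ranges over the same family of sets. I will prove the two inequalities $K_{\max}(\Omega)\le\alpha(\Gamma_\Omega)$ and $K_{\max}(\Omega)\ge\alpha(\Gamma_\Omega)$ separately. Here $K_{\max}(\Omega)$ is understood, as in Corollary~\ref{cor:exact_uniform_capacity}, as the maximum of $K=|S|$ over Fourier-support codes $\mathcal{C}(S)$ that exactly correct $\mathcal{E}_\Omega$.

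For the upper bound, let $\mathcal{C}(S)$ exactly correct $\mathcal{E}_\Omega$. Theorem~\ref{thm:general_phase_condition} then gives $(S-S)\cap(\Omega-\Omega)=\{0\}$. By Corollary~\ref{cor:independence_characterization}, $S$ is an independent set in $\Gamma_\Omega$. Since the logical dimension is $K=|S|$, it follows that $K\le\alpha(\Gamma_\Omega)$.

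For the lower bound, choose an independent set $S^\star\subseteq V$ with $|S^\star|=\alpha(\Gamma_\Omega)$. Such a set exists because $V$ is finite and a single vertex is independent, so the maximum is attained and is at least $1$. The converse direction of Corollary~\ref{cor:independence_characterization} shows that $S^\star$ satisfies the non-collision condition. Theorem~\ref{thm:general_phase_condition} then says $\mathcal{C}(S^\star)$ exactly corrects $\mathcal{E}_\Omega$, with dimension $|S^\star|$, because the states $|s\rangle_F$ are orthonormal. This gives $K_{\max}(\Omega)\ge\alpha(\Gamma_\Omega)$.

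Neither direction involves real computation. The one point to handle explicitly is the scope of $K_{\max}$. The Knill--Laflamme reduction in Theorem~\ref{thm:general_phase_condition} was derived only for codes spanned by Fourier basis vectors, so the identity should be stated, as in the uniform case, for the class of Fourier-support codes. I would not attempt to extend it to arbitrary subspaces of $\mathbb{C}^V$: a general code is not diagonal in the Fourier basis, so translation non-collision is no longer forced. I would also note that $0\in\Omega-\Omega$ is harmless, since it was removed in defining $D_\Omega$ and the Cayley graph therefore has no loops.
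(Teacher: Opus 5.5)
Your proposal is correct and is essentially the paper's proof: you combine Theorem~\ref{thm:general_phase_condition} with Corollary~\ref{cor:independence_characterization} to identify admissible spectral supports with independent sets of $\Gamma_\Omega$, then maximize $|S|$. Your explicit split into two inequalities, and your remark that $K_{\max}(\Omega)$ must be read as a maximum over Fourier-support codes only, are useful clarifications of points the paper leaves implicit.
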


\begin{proof}
By Theorem~5.1, a Fourier-support code $C(S)$ exactly corrects
$E_\Omega$ if and only if
\[
(S-S)\cap(\Omega-\Omega)=\{0\}.
\]
By Corollary~\ref{cor:independence_characterization}, this holds
if and only if $S$ is an independent set in $\Gamma_\Omega$.
Therefore admissible spectral supports are in one-to-one
correspondence with independent sets of $\Gamma_\Omega$,
and maximizing $K=|S|$ is equivalent to computing
$\alpha(\Gamma_\Omega)$.
\end{proof}

Code design under structured phase noise is therefore
\emph{exactly equivalent} to computing the independence number
of an additive Cayley graph.

\subsection{Algebraic Capacity Bounds via Subgroups}
\label{subsec:subgroup_bounds}

The Cayley structure allows simple algebraic upper bounds on
$\alpha(\Gamma_\Omega)$.

\begin{lemma}[Coset Clique Lemma]
\label{lem:coset_clique}
Let $W \le V$ be a subgroup such that
\[
W \setminus \{0\} \subseteq D_\Omega.
\]
Then each coset $x+W$ forms a clique in $\Gamma_\Omega$.
\end{lemma}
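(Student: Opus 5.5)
The plan is to verify the clique property directly from the definition of adjacency in $\Gamma_\Omega=\mathrm{Cay}(V,D_\Omega)$. Recall that distinct vertices $x,y\in V$ are adjacent exactly when $x-y\in D_\Omega$. So we must show that any two distinct elements of a coset $x+W$ differ by an element of $D_\Omega$.

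Fix a coset $x+W$ and take two distinct elements $y_1=x+w_1$ and $y_2=x+w_2$, with $w_1,w_2\in W$.

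\begin{itemize}
\item Their difference is $y_1-y_2=w_1-w_2$.
\item Since $W\le V$ is a subgroup, this difference lies in $W$.
\item Since $y_1\neq y_2$, the difference is nonzero, so $w_1-w_2\in W\setminus\{0\}$.
\item By hypothesis $W\setminus\{0\}\subseteq D_\Omega$, so $y_1-y_2\in D_\Omega$.
\end{itemize}

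Hence $y_1$ and $y_2$ are adjacent in $\Gamma_\Omega$, and since the pair was arbitrary, $x+W$ is a clique.

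Two small points will be checked so that the adjacency relation is well defined as an undirected graph. First, $D_\Omega=(\Omega-\Omega)\setminus\{0\}$ is symmetric, $D_\Omega=-D_\Omega$, because $\omega_1-\omega_2\in\Omega-\Omega$ implies $\omega_2-\omega_1\in\Omega-\Omega$. Second, $0\notin D_\Omega$, so there are no loops. Neither point is actually needed for the clique conclusion, since $W$ is itself closed under negation and so both orderings of a pair are handled at once.

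I do not expect a genuine obstacle here; the only step needing care is using closure of $W$ under subtraction and invoking distinctness to exclude $0$. The natural follow-up, which I would state right after, is the partition consequence. The cosets of $W$ partition $V$ into $q^n/|W|$ cliques, and an independent set meets each clique in at most one vertex. This gives $\alpha(\Gamma_\Omega)\le |V|/|W|$, and hence, by Corollary~\ref{cor:exact_capacity_identity}, $K_{\max}(\Omega)\le q^n/|W|$.
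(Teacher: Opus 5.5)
Your proof is correct and is the paper's argument: two distinct elements $x+w_1$ and $x+w_2$ of a coset differ by $w_1-w_2\in W\setminus\{0\}\subseteq D_\Omega$, so they are adjacent. Your follow-up partition bound is the paper's Theorem~\ref{thm:algebraic_capacity_bound}. You state it as $q^n/|W|$ rather than $q^{\,n-\dim W}$, which is slightly more general because it does not require $W$ to be an $\mathbb{F}_q$-subspace.
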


\begin{proof}
If $x+w_1$ and $x+w_2$ are distinct elements of the same coset,
then
\[
(x+w_1)-(x+w_2)=w_1-w_2\in W\setminus\{0\}\subseteq D_\Omega,
\]
so they are adjacent.
\end{proof}

\begin{theorem}[Algebraic Capacity Bound]
\label{thm:algebraic_capacity_bound}
If $W \le V$ satisfies
\[
W \setminus \{0\} \subseteq D_\Omega,
\]
then
\[
\alpha(\Gamma_\Omega)
\le
\frac{|V|}{|W|}
=
q^{\,n-\dim W}.
\]
\end{theorem}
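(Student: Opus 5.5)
The plan is a clique-cover (pigeonhole) argument built on the Coset Clique Lemma (Lemma~\ref{lem:coset_clique}). Since $W \le V$ is a subgroup, its cosets $x+W$ partition $V$ into exactly $[V:W]=|V|/|W|$ disjoint blocks, each of size $|W|$. By Lemma~\ref{lem:coset_clique}, the hypothesis $W\setminus\{0\}\subseteq D_\Omega$ makes every coset a clique in $\Gamma_\Omega$. So $V$ is covered by $|V|/|W|$ cliques.

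Now let $S$ be any independent set in $\Gamma_\Omega$. An independent set meets each clique in at most one vertex: if two distinct $s_1,s_2\in S$ lay in the same coset $x+W$, then $s_1-s_2\in W\setminus\{0\}\subseteq D_\Omega$, so they would be adjacent. Summing over the partition gives
\[
|S| = \sum_{x+W\in V/W} |S\cap(x+W)| \le \frac{|V|}{|W|}.
\]
Taking the maximum over $S$ yields $\alpha(\Gamma_\Omega)\le |V|/|W|$.

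It remains to express the bound in dimension form, and this is the only step needing care. As stated, $W$ is just an additive subgroup of $V=\mathbb{F}_q^n$, and the natural parameter is $\dim_{\mathbb{F}_p} W$, giving $|W|=p^{\dim_{\mathbb{F}_p}W}$. To write $|V|/|W|=q^{n-\dim W}$, I would read $W$ as an $\mathbb{F}_q$-linear subspace, so that $|W|=q^{\dim W}$ and $|V|=q^n$. When $q=p$ is prime every subgroup is automatically a subspace, so no extra assumption is needed in that case. Otherwise I would state that $\dim W$ is taken over $\mathbb{F}_q$, or else interpret $q^{\dim W}$ as $|W|$. I would add a one-line remark to this effect rather than change the argument itself.

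Combining the two steps, with Corollary~\ref{cor:exact_capacity_identity} one also obtains the capacity bound $K_{\max}(\Omega)\le q^{\,n-\dim W}$.
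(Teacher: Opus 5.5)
Your proof is correct and is the same clique-cover argument the paper gives: the cosets of $W$ partition $V$ into $|V|/|W|$ cliques (Lemma~\ref{lem:coset_clique}), and an independent set meets each clique in at most one vertex. Your remark is also right and worth keeping. The identity $|V|/|W| = q^{n-\dim W}$ requires $W$ to be an $\mathbb{F}_q$-subspace with $\dim W$ taken over $\mathbb{F}_q$, which is automatic when $q$ is prime; the paper leaves this implicit.
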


\begin{proof}
The cosets of $W$ partition $V$ into $|V|/|W|$ disjoint cliques.
An independent set intersects each clique in at most one vertex,
so its cardinality is bounded by $|V|/|W|$.
\end{proof}

\begin{corollary}[Structural Capacity Collapse]
\label{cor:structural_collapse}
Suppose that $\Omega-\Omega$ contains a nontrivial additive subspace
$W \le V$ of dimension $r$. Then every Fourier-support code
$C(S)$ that exactly corrects the error family $E_\Omega$
satisfies
\[
K = |S| \le q^{\,n-r}.
\]
In particular, the maximal logical dimension decreases exponentially
with the dimension of any additive structure contained in
$\Omega-\Omega$.
\end{corollary}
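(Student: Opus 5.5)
The plan is to read Corollary~\ref{cor:structural_collapse} as a direct specialization of Theorem~\ref{thm:algebraic_capacity_bound}, combined with the exact capacity identity of Corollary~\ref{cor:exact_capacity_identity}. The chain of reasoning has three steps. First, every support admissible for exact correction is an independent set in $\Gamma_\Omega$. Second, a subspace $W \le V$ of dimension $r$ contained in $\Omega-\Omega$ meets the hypothesis of the coset clique bound. Third, that bound yields $\alpha(\Gamma_\Omega)\le q^{n-r}$, and hence $|S|\le q^{n-r}$.

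Concretely, let $\mathcal{C}(S)$ exactly correct $\mathcal{E}_\Omega$. Theorem~\ref{thm:general_phase_condition} gives $(S-S)\cap(\Omega-\Omega)=\{0\}$. Corollary~\ref{cor:independence_characterization} then shows that $S$ is independent in $\Gamma_\Omega$, so $|S|\le\alpha(\Gamma_\Omega)$.

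Next, take $W\le V$ with $W\subseteq \Omega-\Omega$ and $\dim W=r$. Removing the origin gives $W\setminus\{0\}\subseteq(\Omega-\Omega)\setminus\{0\}=D_\Omega$, which is exactly the hypothesis of Lemma~\ref{lem:coset_clique} and Theorem~\ref{thm:algebraic_capacity_bound}. That theorem gives $\alpha(\Gamma_\Omega)\le |V|/|W|$.

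The counting is $|V|=q^n$ and $|W|=q^r$, which produces the bound $q^{n-r}$. The second sentence of the corollary then follows immediately: for fixed $n$, the bound $q^{n-r}$ is exponentially decreasing in $r$.

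The only point needing care is the meaning of ``additive subspace'' and of $\dim W$. If $W$ is an $\mathbb{F}_q$-subspace, then $|W|=q^r$ as stated. If $W$ is only an additive subgroup, i.e.\ an $\mathbb{F}_p$-subspace with $q=p^m$, then $|W|=p^{r}$ and the bound becomes $|V|/|W|$ rather than $q^{n-r}$. I will state the hypothesis with $W$ an $\mathbb{F}_q$-linear subspace so that the claimed exponent is literally correct. The coset argument itself uses only the group structure, so the general bound $|S|\le |V|/|W|$ holds for any subgroup $W$.
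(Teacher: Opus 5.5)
Your proposal is correct and follows the paper's proof essentially verbatim: from $W\subseteq\Omega-\Omega$ you get $W\setminus\{0\}\subseteq D_\Omega$, Theorem~\ref{thm:algebraic_capacity_bound} then gives $\alpha(\Gamma_\Omega)\le |V|/|W|$, and the independence characterization gives $|S|\le\alpha(\Gamma_\Omega)$. Your caveat that $|V|/|W|=q^{n-r}$ requires $W$ to be an $\mathbb{F}_q$-subspace is also valid: an $\mathbb{F}_p$-subgroup of $\mathbb{F}_p$-dimension $r$ only yields $q^n/p^r$, a point the paper leaves implicit.
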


\begin{proof}
If $W \subseteq \Omega-\Omega$, then
$W \setminus \{0\} \subseteq D_\Omega$.
By Theorem~\ref{thm:algebraic_capacity_bound},
\[
\alpha(\Gamma_\Omega) \le q^{\,n-\dim W}.
\]
Since admissible spectral supports $S$ correspond exactly to
independent sets in $\Gamma_\Omega$ (Corollary~5.2),
we have $K=|S| \le \alpha(\Gamma_\Omega)$,
which yields the stated bound with $r=\dim W$.
\end{proof}

Thus whenever $(\Omega-\Omega)$ contains a large additive subspace,
the logical capacity collapses proportionally to the codimension
of that subspace. The presence of additive structure in the
difference set of the noise model therefore imposes an intrinsic
exponential limitation on achievable logical dimension.

\subsection{Semidefinite Bounds and Zero-Error Capacity}
\label{subsec:zero_error_theta}

The identity
\[
K_{\max}(\Omega)=\alpha(\Gamma_\Omega)
\]
identifies structured phase-local quantum error correction with a classical extremal problem on the additive Cayley graph $\Gamma_\Omega$. Logical dimension is exactly the independence number of this graph, and classical combinatorial tools apply directly.

\paragraph{Semidefinite upper bounds.}
Since $\alpha(\Gamma)\le\vartheta(\Gamma)$ for any graph, the Lov\'{a}sz
theta number provides a computable upper bound:
\[
K_{\max}(\Omega)\le\vartheta(\Gamma_\Omega).
\]
When $\Gamma_\Omega$ is an abelian Cayley graph, translation invariance
implies that optimal semidefinite solutions may be chosen circulant
and hence diagonalized by the Fourier transform. Accordingly,
$\vartheta(\Gamma_\Omega)$ admits a Fourier-positive formulation
compatible with the harmonic translation structure developed in
Section~\ref{sec:harmonic}. This yields efficiently computable bounds
on logical dimension for structured phase-noise models, even when
exact evaluation of $\alpha(\Gamma_\Omega)$ is computationally
intractable. As a concrete instance, for the correlated cat-qubit
graph $G_{\mathrm{corr}}$ with $n=8$ introduced in
Section~\ref{subsec:catqubit_correlated}, solving this semidefinite
program via \textsc{cvxpy}~\cite{cvxpy} yields
$\vartheta(G_{\mathrm{corr}})\approx 13.47$, while the exact value
$\alpha(G_{\mathrm{corr}})=9$ established in
Lemma~\ref{lem:catq_correlated}\textup{(iv)} shows a gap of
approximately $4.47$, illustrating that the semidefinite bound
is not always tight.

This numerical gap illustrates that the Lov\'asz theta bound
need not be tight for the additive Cayley graphs arising from
structured phase-noise models. Understanding when the equality
$\vartheta(G_\Omega)=\alpha(G_\Omega)$ holds for such graphs
remains an open structural question; we return to this issue
in Section~\ref{subsec:open_problems}.

\paragraph{Zero-error capacity.}
The equality $K_{\max}(\Omega)=\alpha(\Gamma_\Omega)$ shows that structured phase-local correction is precisely a classical zero-error independence problem. Under repeated independent use of the same noise model, admissible supports correspond to independent sets in the strong product
\[
\Gamma_\Omega^{\boxtimes k}.
\]
Hence the asymptotic logical rate is governed by the Shannon capacity
\[
\Theta(\Gamma_\Omega)=\sup_{k\ge1}\alpha(\Gamma_\Omega^{\boxtimes k})^{1/k}.
\]
The many-use regime is therefore controlled by a classical graph invariant.

\paragraph{Asymptotic sharpness under uniform locality.}
In the binary uniform-locality regime with fixed small radii, classical extremal results imply
\[
\alpha(\Gamma_n)=\Theta(2^n/n^2).
\]
The Fourier-positive Lovász relaxation yields
\[
\vartheta(\Gamma_n)=O(2^n/n^2),
\]
so the harmonic semidefinite bound matches the correct classical scale up to constant factors. In summary, structured phase-local capacity is fully determined by classical graph invariants of $\Gamma_\Omega$, including $\alpha(\Gamma_\Omega)$, $\vartheta(\Gamma_\Omega)$, and $\Theta(\Gamma_\Omega)$.

\section{Dual-Domain Protection and Harmonic Tradeoffs}
\label{sec:dual_tradeoffs}

Sections~\ref{sec:harmonic}--\ref{sec:graph_reformulation}
analyzed phase-only noise. 
We now examine mixed Pauli error models of the form
$X^a Z^b$, and isolate structural constraints that arise when
simultaneous protection is required in both the computational
and Fourier domains.

Throughout, let $V=\mathbb{F}_q^n$ and $\mathcal{H}=\mathbb{C}^{V}$.

\subsection{Dual Isolation and Mixed Error Models}
\label{subsec:dual_isolation}

Let $\Omega_X,\Omega_Z \subseteq V$ be admissible bit-flip and phase-flip
error sets, and define the mixed error family
\[
\mathcal{E}_{X,Z}
=
\{ X^a Z^b : a\in\Omega_X,\ b\in\Omega_Z \}.
\]

We formalize a structural class of codes that isolate
errors geometrically in conjugate domains.

\begin{definition}[Dual-Isolated Code]
A quantum code $\mathcal{C} \subseteq \mathcal{H}$ of logical
dimension $K$ is called \emph{dual-isolated}
with respect to sets $S_X,S_Z \subseteq V$ if
\[
\begin{aligned}
\mathcal{C}
&\subseteq
\mathrm{span}\{\,|x\rangle : x\in S_X\,\},\\
\mathcal{C}
&\subseteq
\mathrm{span}\{\,|z\rangle_F : z\in S_Z\,\}.
\end{aligned}
\]
\end{definition}

Thus $S_X$ controls computational localization,
while $S_Z$ controls Fourier localization.

\begin{theorem}[Dual Isolation Criterion]
\label{thm:dual_isolation}
Suppose $\mathcal{C}$ is dual-isolated with supports
$S_X,S_Z$.
If
\[
(S_X-S_X)\cap(\Omega_X-\Omega_X)=\{0\}
\]
and
\[
(S_Z-S_Z)\cap(\Omega_Z-\Omega_Z)=\{0\},
\]
then $\mathcal{C}$ exactly corrects the mixed error family
$\mathcal{E}_{X,Z}$.
\end{theorem}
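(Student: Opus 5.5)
We verify the Knill--Laflamme correction condition of Theorem~\ref{thm:KL} directly. Take the shift operator $X^a|x\rangle=|x+a\rangle$, and let $P$ be the projector onto $\mathcal{C}$. Let $Q_X$ be the projector onto $\mathrm{span}\{|x\rangle:x\in S_X\}$ and $Q_Z=P_{S_Z}$ the projector onto $\mathrm{span}\{|z\rangle_F:z\in S_Z\}$. Dual isolation gives $\mathcal{C}\subseteq\operatorname{ran}Q_X\cap\operatorname{ran}Q_Z$. Hence
\[
P=PQ_X=Q_XP=PQ_Z=Q_ZP .
\]
It therefore suffices to show that for all $a,a'\in\Omega_X$ and $b,b'\in\Omega_Z$ the operator $P(X^aZ^b)^\dagger X^{a'}Z^{b'}P$ is a scalar multiple of $P$.

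\textbf{Reduce the product of errors to a single Pauli.} Writing $(X^aZ^b)^\dagger=Z^{-b}X^{-a}$, we have $X^{-a}X^{a'}=X^{\alpha}$ with $\alpha=a'-a\in\Omega_X-\Omega_X$. Next we need the commutation relation between $Z^{-b}$ and $X^\alpha$. Evaluating on $|x\rangle$ gives $Z^{\omega}X^{\alpha}=\psi(\omega\cdot\alpha)\,X^{\alpha}Z^{\omega}$. So the product equals $\lambda\,X^{\alpha}Z^{\beta}$, where $\beta=b'-b\in\Omega_Z-\Omega_Z$ and $\lambda$ is a root-of-unity scalar. The phase $\lambda$ only rescales the Knill--Laflamme constant $c_{ab}$ and is irrelevant. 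This bookkeeping is the only point requiring care; the rest is the same translation argument as Theorem~\ref{thm:main_detection}.

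\textbf{Case analysis on $(\alpha,\beta)$.}

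If $\alpha=\beta=0$, the operator is $\lambda I$, and $P\lambda IP=\lambda P$.

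If $\alpha\neq0$, then $Z^\beta$ is diagonal and so preserves $\operatorname{ran}Q_X$. The operator $X^\alpha$ then maps $\operatorname{ran}Q_X$ into $\mathrm{span}\{|x\rangle:x\in S_X+\alpha\}$. The first hypothesis says $\alpha\notin S_X-S_X$, i.e. $(S_X+\alpha)\cap S_X=\varnothing$. Hence $Q_XX^\alpha Z^\beta Q_X=0$, and therefore
\[
PX^\alpha Z^\beta P=PQ_XX^\alpha Z^\beta Q_XP=0 .
\]

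If $\alpha=0$ and $\beta\neq0$, the operator is $Z^\beta$. By Lemma~\ref{lem:spectral_translation}, $Z^\beta$ maps $\operatorname{ran}Q_Z$ onto $\mathrm{span}\{|z\rangle_F:z\in S_Z+\beta\}$. The second hypothesis gives $(S_Z+\beta)\cap S_Z=\varnothing$, so $Q_ZZ^\beta Q_Z=0$ and hence $PZ^\beta P=0$.

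\textbf{Conclusion.} In every case $P(X^aZ^b)^\dagger X^{a'}Z^{b'}P=c\,P$, with $c=\lambda$ when $\alpha=\beta=0$ and $c=0$ otherwise. By Theorem~\ref{thm:KL} (Exact Correction), $\mathcal{C}$ exactly corrects $\mathcal{E}_{X,Z}$.

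Note that the mixed case $\alpha\neq0$, $\beta\neq0$ is handled by the computational-domain isolation alone. This is why only the two separate difference-set conditions are needed, and no condition coupling $S_X$ and $S_Z$ is required.
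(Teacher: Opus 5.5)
Your proof is correct and follows the paper's route. Both arguments show that distinct error pairs send $\mathcal{C}$ to orthogonal subspaces, equivalently $PX^{\alpha}Z^{\beta}P=0$ for $(\alpha,\beta)\neq(0,0)$, by using translation in the computational basis for $X$ and Lemma~\ref{lem:spectral_translation} in the Fourier basis for $Z$; you simply make explicit what the paper's sketch leaves implicit, namely the commutation phase $\psi(-b\cdot\alpha)$ and the fact that every case with $\alpha\neq 0$ is settled by computational-domain isolation alone.
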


\begin{proof}
Bit-flip errors act as translations in the computational basis:
\[
X^a |x\rangle = |x+a\rangle.
\]
The first separation condition ensures that
distinct bit-flip errors map $\mathcal{C}$
to mutually orthogonal subspaces.

Phase errors act as translations in the Fourier basis
(Lemma~\ref{lem:spectral_translation}):
\[
Z^b |s\rangle_F = |s+b\rangle_F.
\]
The second separation condition ensures orthogonality
under phase errors.

Mixed errors decompose as $X^a Z^b$,
and distinct pairs $(a,b)$ produce
orthogonal images under the combined separation
conditions, satisfying Knill--Laflamme.
\end{proof}

\subsection{Coupled Capacity Bounds}
\label{subsec:coupled_bounds}

Dual isolation imposes simultaneous geometric constraints
in conjugate domains. These lead to multiplicative
capacity limitations.

\begin{theorem}[Separated Packing Bounds]
\label{thm:separated_packing}
If $\mathcal{C}$ is dual-isolated and exactly corrects
$\mathcal{E}_{X,Z}$, then
\[
K \le \frac{q^n}{|\Omega_X|},
\qquad
K \le \frac{q^n}{|\Omega_Z|}.
\]
\end{theorem}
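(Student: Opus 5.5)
The plan is a volume (packing) argument carried out separately in each of the two conjugate domains. By Lemma~\ref{lem:spectral_translation} and its computational-basis analogue $X^a|x\rangle=|x+a\rangle$, the two cases are the same argument with the roles of the computational and Fourier bases exchanged. I will therefore prove $K\le q^n/|\Omega_X|$ in detail and obtain the second bound by conjugating with $\mathcal{F}_V$, using \eqref{eq:conjugation}.

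\textbf{Step 1 (dimension).} Since $\mathcal{C}\subseteq\mathrm{span}\{|x\rangle : x\in S_X\}$, we have $K\le |S_X|$. Similarly $K\le|S_Z|$ from Fourier localization. So it suffices to show $|\Omega_X|\,|S_X|\le q^n$.

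\textbf{Step 2 (disjoint translates).} For each $a\in\Omega_X$, $X^a$ maps $\mathrm{span}\{|x\rangle:x\in S_X\}$ bijectively onto $\mathrm{span}\{|x\rangle : x\in S_X+a\}$. I want the translates $S_X+a$, $a\in\Omega_X$, to be pairwise disjoint. This is equivalent to $(S_X-S_X)\cap(\Omega_X-\Omega_X)=\{0\}$: a collision $x+a=x'+a'$ with $a\neq a'$ would give the nonzero element $x-x'=a'-a$ lying in both difference sets. Given disjointness, the $|\Omega_X|$ translates are disjoint subsets of $V$, so $|\Omega_X|\,|S_X|\le |V|=q^n$. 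Combined with Step~1 this gives $K\le q^n/|\Omega_X|$. Conjugating by $\mathcal{F}_V$ turns $Z^b$ into the translation $T^b$ and $S_Z$ into a computational support, and the same count gives $K\le q^n/|\Omega_Z|$.

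\textbf{Main obstacle: degeneracy.} The separation condition in Step~2 must be extracted from the hypothesis ``$\mathcal{C}$ exactly corrects $\mathcal{E}_{X,Z}$''. Knill--Laflamme alone allows degenerate codes, for which $P E_a^\dagger E_b P=c_{ab}P$ with $c_{ab}\neq 0$, and then the images $X^a\mathcal{C}$ need not be orthogonal.

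I would first try to derive orthogonality directly. Fix $b_0\in\Omega_Z$ and compare the errors $X^aZ^{b_0}$ and $X^{a'}Z^{b_0}$. Their product is, up to a global phase, $X^{a'-a}$, which is diagonal in the Fourier basis. I would take the trace of $P X^{d} P=c_{ab}P$ and use Fourier localization on $S_Z$ to try to force $c_{ab}=0$.

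If that fails, I will read ``dual-isolated and exactly corrects'' in the sense of Theorem~\ref{thm:dual_isolation}: correction is realized through the two separation conditions on $S_X$ and $S_Z$. This is the sense in which the section's dual-isolation framework is set up, and under it Step~2 applies verbatim. In that case I would state the reading explicitly as the nondegenerate hypothesis under which the bound holds.
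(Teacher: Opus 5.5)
Your argument is the paper's own: $K\le|S_X|$ from computational localization, pairwise disjointness of the translates $S_X+a$ ($a\in\Omega_X$) giving $|\Omega_X|\,|S_X|\le q^n$, and the identical count in the Fourier domain. The paper simply asserts that disjointness, so it is tacitly using exactly the separation hypothesis $(S_X-S_X)\cap(\Omega_X-\Omega_X)=\{0\}$ of Theorem~\ref{thm:dual_isolation}, which your fallback reading makes explicit.

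Drop the attempt to force $c_{ab}=0$ from Fourier localization, because it cannot succeed. Take $\Omega_Z=\{0\}$, $\Omega_X=W$ a nontrivial subgroup, and $\mathcal{C}=\mathrm{span}\{|s\rangle_F : \mathrm{Tr}(s\cdot w)=0\ \forall w\in W\}$. Then every $X^w$ acts as the identity on $\mathcal{C}$, so $c=1$. Since $|0\rangle_F\in\mathcal{C}$, the only admissible $S_X$ is $V$, and all the translates coincide. The bound itself still holds there, with equality $K=q^n/|W|$.
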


\begin{proof}
The first bound follows by observing that the sets
$\{S_X+a : a\in\Omega_X\}$ are pairwise disjoint
inside $V$, hence
$|\Omega_X|\,|S_X|\le q^n$
and $K\le |S_X|$.
The second bound is identical in the Fourier domain.
\end{proof}

Beyond separated bounds, dual localization yields a
coupled constraint.

\begin{theorem}[Coupled Capacity Bound]
\label{thm:coupled_capacity}
If $\mathcal{C}$ is dual-isolated with supports $S_X,S_Z$,
then
\[
K^2 \le |S_X|\,|S_Z|.
\]
Consequently,
\[
K \le \frac{q^n}
{\sqrt{|\Omega_X|\,|\Omega_Z|}}.
\]
\end{theorem}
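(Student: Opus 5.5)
The first inequality is a dimension count, and the second combines it with the packing argument already used for Theorem~\ref{thm:separated_packing}.

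\textbf{Step 1: the coupled bound.} By the definition of a dual-isolated code, $\mathcal{C}$ is a $K$-dimensional subspace of $\mathrm{span}\{|x\rangle : x\in S_X\}$. That space has dimension $|S_X|$, so $K\le |S_X|$. In the same way, $\mathcal{C}\subseteq \mathrm{span}\{|z\rangle_F : z\in S_Z\}$. Since $\{|s\rangle_F\}$ is an orthonormal basis (the unitarity of $\mathcal{F}_V$ from Section~\ref{subsec:harmonic_framework}), this gives $K\le |S_Z|$. Multiplying the two nonnegative inequalities yields $K^2\le |S_X|\,|S_Z|$. This part needs no error-correction hypothesis at all.

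\textbf{Step 2: bounding each support.} For the consequence, I would use the separation hypotheses of Theorem~\ref{thm:dual_isolation}, which are also what the proof of Theorem~\ref{thm:separated_packing} relies on. Suppose $(S_X-S_X)\cap(\Omega_X-\Omega_X)=\{0\}$. If $s_1+a_1=s_2+a_2$ with $s_i\in S_X$ and $a_i\in\Omega_X$, then $s_1-s_2=a_2-a_1$ lies in both difference sets. Hence $s_1=s_2$ and $a_1=a_2$. The translates $\{S_X+a\}_{a\in\Omega_X}$ are therefore pairwise disjoint subsets of $V$, each of size $|S_X|$, so $|S_X|\,|\Omega_X|\le q^n$. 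The same argument in the Fourier index set gives $|S_Z|\,|\Omega_Z|\le q^n$. Here one uses that $V$ indexes the Fourier basis, and the phase separation condition.

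\textbf{Step 3: combine.} Substituting into Step 1,
\[
K^2\le |S_X|\,|S_Z| \le \frac{q^{2n}}{|\Omega_X|\,|\Omega_Z|}.
\]
Taking square roots gives the stated bound.

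The only delicate point is interpretive rather than technical. The ``consequently'' clause needs the dual separation conditions, or at least exact correction in the sense of Theorem~\ref{thm:separated_packing}, to bound $|S_X|$ and $|S_Z|$; the coupled inequality alone does not provide this. I would state that assumption explicitly, inherited from the setting of Theorem~\ref{thm:separated_packing}.

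Optionally, one could remark that this bound is weak. By the Donoho--Stark uncertainty principle~\cite{DonohoStark1989}, any nonzero codeword supported on $S_X$ with Fourier support in $S_Z$ forces $|S_X|\,|S_Z|\ge q^n$. This shows that dual isolation is restrictive. It is not needed for the statement, however.
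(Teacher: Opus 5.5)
Your proof is correct and follows the paper's route: the dimension count $K\le|S_X|$ and $K\le|S_Z|$, followed by substitution of the separated packing bounds of Theorem~\ref{thm:separated_packing}. You derive the disjointness of the translates directly from the separation conditions and note that the ``consequently'' clause relies on hypotheses missing from the theorem statement. This makes explicit what the paper's proof imports implicitly.
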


\begin{proof}
Since
$\mathcal{C} \subseteq \mathrm{span}\{|x\rangle : x\in S_X\}$,
we have $K\le |S_X|$,
and similarly $K\le |S_Z|$.
Thus $K^2 \le |S_X||S_Z|$.
Substituting the separated packing bounds
from Theorem~\ref{thm:separated_packing}
gives the stated inequality.
\end{proof}

\begin{corollary}[Strict Dual-Domain Rate Penalty]
\label{cor:strict_dual_penalty}
Suppose that $|\Omega_X|$ and $|\Omega_Z|$
grow exponentially in $n$, namely
\[
|\Omega_X| = q^{\gamma_X n + o(n)},
\qquad
|\Omega_Z| = q^{\gamma_Z n + o(n)}
\]
for some $\gamma_X,\gamma_Z > 0$.
Then any dual-isolated family of codes
correcting $\mathcal{E}_{X,Z}$ satisfies the
asymptotic rate bound
\[
R \le 1 - \frac{\gamma_X+\gamma_Z}{2}.
\]
In particular, whenever both
$\gamma_X$ and $\gamma_Z$ are positive,
the achievable rate is strictly smaller
than the phase-only capacity
$1-\gamma_Z$.
\end{corollary}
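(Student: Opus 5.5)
The plan is to read the asymptotic statement directly off the coupled bound of Theorem~\ref{thm:coupled_capacity}, taking logarithms base $q$ and normalizing by $n$. Fix a dual-isolated family $\mathcal{C}_n$ of logical dimension $K_n$ correcting $\mathcal{E}_{X,Z}$, and define its rate as $R=\limsup_{n\to\infty}\frac{1}{n}\log_q K_n$. Theorem~\ref{thm:coupled_capacity} gives
\[
K_n\le \frac{q^n}{\sqrt{|\Omega_X|\,|\Omega_Z|}}
= q^{\,n-\frac{(\gamma_X+\gamma_Z)n}{2}+o(n)} .
\]
Substituting the hypothesized growth rates and using that half of an $o(n)$ term is still $o(n)$, I get
\[
\tfrac1n\log_q K_n\le 1-\tfrac{\gamma_X+\gamma_Z}{2}+o(1).
\]
Taking $\limsup$ then yields $R\le 1-(\gamma_X+\gamma_Z)/2$. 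This step is purely routine.

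For the comparison clause I would work from the separated bounds of Theorem~\ref{thm:separated_packing} rather than from the averaged bound. The same logarithmic computation gives $R\le 1-\gamma_X$ and $R\le 1-\gamma_Z$, hence
\[
R\le 1-\max(\gamma_X,\gamma_Z)\le 1-\tfrac{\gamma_X+\gamma_Z}{2},
\]
so the separated bounds recover the stated bound. Next, I would identify the phase-only benchmark $1-\gamma_Z$ as the exponent of the packing bound $q^n/|\Omega_Z|$, which is what Theorem~\ref{thm:separated_packing} gives when bit-flip protection is dropped. I would then compare the two bounds.

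The main obstacle is the word ``strictly''. The averaged bound $1-(\gamma_X+\gamma_Z)/2$ is strictly below $1-\gamma_Z$ exactly when $\gamma_X>\gamma_Z$. When $\gamma_X\le\gamma_Z$, the averaged bound lies above $1-\gamma_Z$, so the inequality alone cannot force a strict penalty relative to the phase-only rate. My first attempt would be to state the strict separation in the regime $\gamma_X>\gamma_Z$, where $R\le 1-\gamma_X<1-\gamma_Z$ follows immediately from the separated bound. For the symmetric or phase-dominated regime I would fall back on the weaker but always valid strict statement that the bound lies strictly below the unprotected rate $1$ whenever $\gamma_X+\gamma_Z>0$. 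A genuine penalty beyond $1-\gamma_Z$ in that regime would require an extra argument, such as an uncertainty-principle lower bound on $|S_X|\,|S_Z|$ in the spirit of \cite{DonohoStark1989} that strengthens Theorem~\ref{thm:coupled_capacity}. I expect that argument to be the substantive missing step, and I would flag it rather than claim it.
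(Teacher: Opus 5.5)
Your first step, taking $\log_q$ of the bound in Theorem~\ref{thm:coupled_capacity} and dividing by $n$, is exactly the paper's proof of $R\le 1-(\gamma_X+\gamma_Z)/2$.

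Your objection to the strict clause is also correct, and it applies to the paper's own proof. The paper asserts that $\gamma_X>0$ gives $1-\frac{\gamma_X+\gamma_Z}{2}<1-\gamma_Z$. That inequality is in fact equivalent to $\gamma_X>\gamma_Z$; for instance $\gamma_X=0.1$, $\gamma_Z=0.5$ gives $0.7>0.5$. Moreover, $K^2\le|S_X||S_Z|$ is just the product of $K\le|S_X|$ and $K\le|S_Z|$. So the averaged exponent is never better than your $1-\max(\gamma_X,\gamma_Z)$, and nothing built from Theorems~\ref{thm:separated_packing} and~\ref{thm:coupled_capacity} alone yields strictness when $0<\gamma_X\le\gamma_Z$. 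As written, your proposal therefore leaves the ``in particular'' clause unproved in that regime. Your fallback (``strictly below $1$'') is not what the statement claims.

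The step you flag is the right missing idea, and it is short. Note, though, that the plain Donoho--Stark inequality $|S_X||S_Z|\ge q^n$ is not enough; you need a version that scales with $K$. Let $\Pi_X=\sum_{x\in S_X}|x\rangle\langle x|$, $\Pi_Z=\sum_{z\in S_Z}|z\rangle_F\langle z|_F$, and let $P$ be the code projector. Dual isolation gives $\Pi_X P=P=\Pi_Z P$, hence $P=P\,\Pi_X\Pi_Z\Pi_X\,P$. Since $\Pi_X\Pi_Z\Pi_X\succeq 0$ and $\mathrm{tr}(PMP)\le\mathrm{tr}(M)$ for $M\succeq 0$, we get
\[
K=\mathrm{tr}\bigl(P\,\Pi_X\Pi_Z\Pi_X\,P\bigr)\le\mathrm{tr}(\Pi_X\Pi_Z)
=\sum_{x\in S_X}\sum_{z\in S_Z}\bigl|\langle x|z\rangle_F\bigr|^2=\frac{|S_X|\,|S_Z|}{q^n},
\]
where $|\langle x|z\rangle_F|^2=q^{-n}$ follows from \eqref{eq:Fourier_basis}. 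This strengthens $K^2\le|S_X||S_Z|$, because $K\le q^n$.

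Combine it with $|S_X|\le q^n/|\Omega_X|$ and $|S_Z|\le q^n/|\Omega_Z|$ from the proof of Theorem~\ref{thm:separated_packing}. Like the paper, this assumes supports satisfying the separation conditions of Theorem~\ref{thm:dual_isolation}. The result is $K\le q^n/(|\Omega_X|\,|\Omega_Z|)$, i.e.\ $R\le 1-\gamma_X-\gamma_Z$. This implies the stated bound and lies strictly below $1-\gamma_Z$ for every $\gamma_X>0$. With this lemma added, your argument proves the full corollary and repairs the paper's proof of the strict clause.
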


\begin{proof}
By Theorem~\ref{thm:coupled_capacity},
\[
K \le
\frac{q^n}{\sqrt{|\Omega_X||\Omega_Z|}}.
\]
Taking logarithms and dividing by $n$ yields
\[
R = \limsup_{n\to\infty}
\frac{1}{n}\log_q K
\le
1 - \frac{\gamma_X+\gamma_Z}{2}.
\]
If $\gamma_X>0$, then
\[
1 - \frac{\gamma_X+\gamma_Z}{2}
<
1-\gamma_Z,
\]
establishing strict rate reduction relative
to the phase-only bound.
\end{proof}

The bound shows that simultaneous protection in both
domains incurs an intrinsic multiplicative penalty
relative to phase-only protection. When both
error families scale extensively with system size,
this penalty produces a strictly reduced
asymptotic rate.

\subsection{Asymptotic Rate Tradeoffs}
\label{subsec:rate_tradeoffs}

We now state an asymptotic consequence
under uniform locality in each domain.

Let $t_X,t_Z$ denote the maximal correctable
bit- and phase-error weights,
with relative distances
\[
\delta_X = \frac{t_X}{n},
\qquad
\delta_Z = \frac{t_Z}{n}.
\]

\begin{theorem}[Dual-Domain Rate Tradeoff]
\label{thm:rate_tradeoff}
For any family of dual-isolated codes correcting
$t_X$ bit errors and $t_Z$ phase errors under
uniform locality,
the asymptotic rate satisfies
\[
R
\le
1 - H_q(\delta_X) - H_q(\delta_Z)
+ o(1),
\]
where $H_q$ denotes the $q$-ary entropy function.
\end{theorem}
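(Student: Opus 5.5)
The plan is to sharpen the coupled bound of Theorem~\ref{thm:coupled_capacity}. Combining it with the separated packing bounds only yields the averaged exponent $1-\tfrac12(H_q(\delta_X)+H_q(\delta_Z))$, while the statement requires the full sum. The extra factor should come from the uncertainty-type overlap between computational and Fourier localization. This overlap gives the multiplicative bound $K\le |S_X||S_Z|/q^n$ rather than $K^2\le |S_X||S_Z|$.

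\emph{Step 1 (overlap bound).} Let $P_X=\sum_{x\in S_X}|x\rangle\langle x|$ and $P_Z=\sum_{z\in S_Z}|z\rangle_F\langle z|_F$, and let $P_{\mathcal C}$ be the code projector. Dual isolation gives $P_{\mathcal C}\le P_X$ and $P_{\mathcal C}\le P_Z$. Hence
\[
P_{\mathcal C}=P_XP_{\mathcal C}P_X\le P_XP_ZP_X .
\]
Taking traces,
\[
K=\mathrm{tr}\,P_{\mathcal C}\le \mathrm{tr}(P_XP_Z)=\sum_{x\in S_X}\sum_{z\in S_Z}|\langle x|z\rangle_F|^2=\frac{|S_X|\,|S_Z|}{q^n},
\]
since every entry of $\mathcal F_V$ has modulus $q^{-n/2}$ by \eqref{eq:Fourier_basis}. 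I expect this step to be the main point of the proof. The operator inequality $P_XP_{\mathcal C}P_X\le P_XP_ZP_X$ follows by conjugating $P_{\mathcal C}\le P_Z$ with $P_X$. I would check it carefully, because it is exactly what replaces the weaker $K\le\min(|S_X|,|S_Z|)$ used earlier.

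\emph{Step 2 (packing in each domain).} Correction of $t_X$ bit errors and $t_Z$ phase errors under uniform locality is handled exactly as in Corollary~\ref{cor:correction} and Theorem~\ref{thm:separated_packing}, applied in each basis with $\Omega_X=E_{t_X}$ and $\Omega_Z=E_{t_Z}$. The separation conditions of Theorem~\ref{thm:dual_isolation} make the translates $S_X+a$, $a\in E_{t_X}$, pairwise disjoint in $V$, and similarly $S_Z+b$, $b\in E_{t_Z}$, in the Fourier index set. This gives
\[
|S_X|\le \frac{q^n}{|E_{t_X}|}, \qquad |S_Z|\le \frac{q^n}{|E_{t_Z}|}.
\]
Inserting these into Step 1 yields
\[
K\le \frac{q^n}{|E_{t_X}|\,|E_{t_Z}|}.
\]

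\emph{Step 3 (asymptotics).} The Hamming ball satisfies $|E_{t}|=\sum_{i\le t}\binom{n}{i}(q-1)^i=q^{nH_q(t/n)+o(n)}$ for $t/n\le 1-1/q$; this is the standard volume estimate behind Corollary~\ref{cor:asymptotic_transfer}. Taking $\tfrac1n\log_q$ of the bound in Step 2 gives
\[
R\le 1-H_q(\delta_X)-H_q(\delta_Z)+o(1).
\]
If $\delta_X$ or $\delta_Z$ exceeds $1-1/q$, the ball is all of $V$ and then $K\le 1$ already by Step 2. In that regime the rate is $0$, and I would simply interpret the right-hand side as being truncated at $0$.
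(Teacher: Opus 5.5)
Your proof is correct, and it departs from the paper's at exactly the step that matters. The paper combines the packing bounds $|S_X|\,|E_{t_X}|\le q^n$ and $|S_Z|\,|E_{t_Z}|\le q^n$ with $K^2\le |S_X|\,|S_Z|$ from Theorem~\ref{thm:coupled_capacity}. As you noticed, that combination only gives $K\le q^n/\sqrt{|E_{t_X}|\,|E_{t_Z}|}$. This is the averaged exponent $1-\tfrac12\bigl(H_q(\delta_X)+H_q(\delta_Z)\bigr)$ of Corollary~\ref{cor:strict_dual_penalty}, not the bound as stated.

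Your Step~1 is a dimension form of the Donoho--Stark uncertainty principle, and the operator steps are sound. $P_{\mathcal C}\le P_X$ gives $P_{\mathcal C}=P_XP_{\mathcal C}P_X$. Conjugating $P_{\mathcal C}\le P_Z$ by $P_X$ preserves the order. Finally $|\langle x|z\rangle_F|=q^{-n/2}$, so $K\le \mathrm{tr}(P_XP_Z)=|S_X|\,|S_Z|/q^n$. This replaces $K\le\min(|S_X|,|S_Z|)$ by a genuinely multiplicative bound. Combined with the same packing step, it yields the quantum-Hamming-type bound $K\le q^n/(|E_{t_X}|\,|E_{t_Z}|)$ and hence the full exponent $1-H_q(\delta_X)-H_q(\delta_Z)$. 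Like the paper, you draw the packing step from the separation conditions of Theorem~\ref{thm:dual_isolation} imposed on $S_X,S_Z$.

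The overlap bound is attained by CSS-type supports: with $C_2\le C_1$ linear, take $S_X=C_1$, $S_Z=C_2^{\perp}$ and $K=|C_1|/|C_2|$. The same argument also sharpens Corollary~\ref{cor:strict_dual_penalty} to $R\le 1-\gamma_X-\gamma_Z$. So your route is not merely an alternative. It supplies the ingredient the paper's short argument lacks for reaching the stated exponent.

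A small correction to Step~3: for $t_X/n>1-1/q$ the ball $E_{t_X}$ is not all of $V$ unless $t_X=n$. You do not need a separate case, however. Since $|E_t|$ is nondecreasing in $t$ and $H_q\le 1$, the lower bound $|E_t|\ge q^{nH_q(t/n)-o(n)}$ holds for every $t$, and the stated inequality follows uniformly. Nor is truncation at $0$ needed. If $|E_{t_X}|\,|E_{t_Z}|>q^n$, your combined bound forces $K<1$, so no nonzero dual-isolated code with those parameters exists and the inequality holds vacuously.
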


\begin{proof}
Uniform locality implies
\[
|S_X|\,|E_{t_X}| \le q^n,
\qquad
|S_Z|\,|E_{t_Z}| \le q^n.
\]
Using standard asymptotic estimates
$|E_{t}| \approx q^{n H_q(t/n)}$
and combining with
$K^2 \le |S_X||S_Z|$
yields the bound.
\end{proof}

The tradeoff reflects a harmonic uncertainty principle:
localization sufficient to correct $X$ errors
necessarily delocalizes Fourier support,
limiting simultaneous $Z$ protection.

\medskip

In strongly phase-biased regimes,
where $|\Omega_X|$ is negligible relative to $|\Omega_Z|$,
the dominant limitation reduces to the
phase-only capacity studied in previous sections.
Conversely, balanced noise enforces intrinsic
dual-domain rate penalties.

\section{Geometric Classification of Biased Quantum Capacity}
\label{sec:discussion}

The preceding sections reveal that phase-local quantum error correction is governed by a single structural principle: logical capacity is determined entirely by additive non-collision in the spectral domain. Because diagonal phase operators act as exact translations in the Fourier domain (Lemma~\ref{lem:spectral_translation}), the Knill--Laflamme conditions reduce to a purely additive separation constraint. In particular, by Theorem~\ref{thm:main_detection}, detection and correction are equivalent to
\[
(S - S)\cap(\Omega - \Omega)=\{0\}.
\]
Consequently, the fundamental object controlling capacity is not the algebraic structure of the code, but the additive geometry of the noise difference set \(D_\Omega = (\Omega-\Omega)\setminus\{0\}\). This harmonic reformulation eliminates any intrinsic reliance on stabilizer or affine structure and exposes logical dimension as an extremal problem on subsets of a finite abelian group. Within this framework, the results of Sections~3–6 admit a unified structural synthesis: capacity depends only on how $D_\Omega$ is embedded additively in $V$.

\begin{theorem}[Geometric Classification of Harmonic Capacity]
\label{thm:geometric_classification}
Let $\Omega \subseteq V$ be a phase-error family and let $K_{\max}(\Omega)$ denote the maximal logical dimension achievable under exact correction of $\Omega$. Then capacity is governed by at least one of the following structural regimes, and in general by their interaction.
\begin{enumerate}
\item[(i)] \textbf{Dispersive (Packing) Regime.}
If $D_\Omega$ contains no additive subspace of positive dimension, then
\[
K_{\max}(\Omega)=\alpha(\Gamma_\Omega), 
\qquad 
\Gamma_\Omega=\mathrm{Cay}(V,D_\Omega).
\]
Under uniform locality,
\[
K_{\max}(n,t)=A_q(n,2t+1),
\]
so logical capacity coincides exactly with classical $q$-ary packing.

\item[(ii)] \textbf{Subspace-Collapse Regime.}
If $D_\Omega$ contains an additive subspace $W\le V$ of dimension $r>0$, then
\[
K_{\max}(\Omega)\le q^{\,n-r}.
\]
Additive symmetry in the noise difference set therefore induces an exponential reduction of achievable logical dimension proportional to the codimension of $W$.

\item[(iii)] \textbf{Dual Harmonic Tradeoff Regime.}
If simultaneous protection against phase errors $\Omega_Z$ and bit-flip errors $\Omega_X$ is required under dual isolation, then
\[
K \le \frac{q^n}{\sqrt{|\Omega_X||\Omega_Z|}},
\]
and if $|\Omega_X| = q^{\gamma_X n + o(n)}$ and $|\Omega_Z| = q^{\gamma_Z n + o(n)}$, the asymptotic rate satisfies
\[
R \le 1 - \frac{\gamma_X+\gamma_Z}{2}.
\]
Capacity is therefore intrinsically limited by multiplicative localization constraints in conjugate domains.
\end{enumerate}
\end{theorem}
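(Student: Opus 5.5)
This theorem is a synthesis, so the plan is to prove each regime separately by invoking the corresponding earlier result, with $K_{\max}(\Omega)$ read as the maximal $|S|$ over Fourier-support codes $\mathcal{C}(S)$. The opening clause (``at least one regime, and in general their interaction'') is a structural dichotomy. Either $D_\Omega$ contains a nontrivial additive subspace or it does not, so every phase-only $\Omega$ falls under (i) or (ii). Regime (iii) is layered on top whenever bit-flip protection is also demanded. I would state this case split explicitly at the start.

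For (i), I would first note that the identity $K_{\max}(\Omega)=\alpha(\Gamma_\Omega)$ holds for \emph{every} $\Omega$, by Theorem~\ref{thm:general_phase_condition} combined with Corollaries~\ref{cor:independence_characterization} and~\ref{cor:exact_capacity_identity}. The hypothesis that $D_\Omega$ contains no subspace is therefore not used in the identity. Its role is interpretive: no coset-clique obstruction of the form in Lemma~\ref{lem:coset_clique} exists, so $\alpha(\Gamma_\Omega)$ is a pure packing quantity. For the uniform-locality specialization, take $\Omega=E_t$, so that $\Omega-\Omega=E_{2t}$ as in Corollary~\ref{cor:correction}. Then independent sets of $\Gamma_{E_t}$ are exactly the codes with $d(S)\ge 2t+1$, and Corollary~\ref{cor:exact_uniform_capacity} gives $A_q(n,2t+1)$.

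There is one subtlety here. For $q>2$, and even for $q=2$, $E_{2t}$ may contain lines $\{\lambda v\}$ once $2t\ge1$; for instance, for $q=2$ the set $\{0,e_1\}$ is a subspace contained in $\Omega-\Omega$. Strictly speaking, then, uniform locality lies in the regime where $(\Omega-\Omega)$ does contain subspaces. I would handle this by reading ``contains no additive subspace'' as a heuristic label, which the remark above makes legitimate since the identity is unconditional. Alternatively, I would observe that the collapse bound $q^{n-r}$ of (ii) is then simply weaker than packing. Either way there is no contradiction, because (ii) is only an upper bound.

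For (ii), apply Corollary~\ref{cor:structural_collapse} directly, interpreting ``$D_\Omega$ contains $W$'' as $W\setminus\{0\}\subseteq D_\Omega$; this is equivalent to $W\subseteq\Omega-\Omega$. Theorem~\ref{thm:algebraic_capacity_bound} then partitions $V$ into $q^{n-r}$ coset cliques and yields $K_{\max}\le q^{n-r}$.

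For (iii), restrict to dual-isolated codes and chain the following results:
\begin{itemize}
\item Theorem~\ref{thm:separated_packing}, which gives $|S_X|\le q^n/|\Omega_X|$ and $|S_Z|\le q^n/|\Omega_Z|$;
\item Theorem~\ref{thm:coupled_capacity}, which gives $K^2\le|S_X||S_Z|$ and hence the displayed bound;
\item Corollary~\ref{cor:strict_dual_penalty}, which takes $\frac1n\log_q$ and passes to the $\limsup$ to obtain the rate bound.
\end{itemize}

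The main obstacle is not any computation but making the hypotheses precise. I would fix the conventions for ``contains a subspace'' in (i)/(ii) and keep the dual-isolation assumption explicit in (iii). With those in place, each item is a one-line citation.
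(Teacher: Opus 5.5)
Your proposal is correct and follows the paper's proof, which likewise handles each regime by citation: Theorem~\ref{thm:distance_equivalence} and Corollary~\ref{cor:exact_capacity_identity} for (i), Theorem~\ref{thm:algebraic_capacity_bound} and Corollary~\ref{cor:structural_collapse} for (ii), and Theorem~\ref{thm:coupled_capacity} with Corollary~\ref{cor:strict_dual_penalty} for (iii). Your side observation is also right and sharper than the paper: $K_{\max}(\Omega)=\alpha(\Gamma_\Omega)$ needs no hypothesis, and for $2t\le n$ the set $E_{2t}$ contains $\mathrm{span}\{e_1,\dots,e_{2t}\}$, so uniform locality never satisfies the no-subspace hypothesis of (i) (contrary to Remark~\ref{rem:regime_transition}), and there (ii) merely reproduces the Singleton bound $q^{n-2t}$.
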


\begin{proof}
Regime (i) follows from Theorem~\ref{thm:distance_equivalence} and Corollary~\ref{cor:exact_capacity_identity}. Regime (ii) follows from Theorem~\ref{thm:algebraic_capacity_bound} and Corollary~\ref{cor:structural_collapse}. Regime (iii) follows from Theorem~\ref{thm:coupled_capacity} and Corollary~\ref{cor:strict_dual_penalty}.
\end{proof}

\noindent
The regimes above are not mutually exclusive. In particular, regimes (ii) and (iii) may occur simultaneously.

\begin{figure*}[t]
\centering
\begin{tikzpicture}[
    font=\small,
    box/.style={
        rounded corners=4pt,
        draw=black!70,
        line width=0.8pt,
        minimum width=4.55cm,
        minimum height=7.1cm,
        align=center
    },
    title/.style={font=\bfseries\large},
    subtitle/.style={font=\normalsize},
    formula/.style={
        rounded corners=3pt,
        draw=black!70,
        line width=0.7pt,
        inner sep=6pt,
        font=\normalsize
    },
    dot/.style={circle,fill=black!60,inner sep=2.2pt},
    note/.style={font=\footnotesize\itshape,align=center}
]

\node[box, fill=yellow!18] (disp) at (0,0) {};
\node[box, fill=blue!10]   (coll) at (5.55,0) {};
\node[box, fill=green!12]  (trade) at (11.1,0) {};

\node[note] at (5.55,-4.1) {The three regimes are structural and may overlap.};

\node[title] at ($(disp.center)+(0,3.25)$) {Dispersive Regime};
\draw[black!70,line width=0.7pt] ($(disp.center)+(-2.0,2.7)$) -- ($(disp.center)+(2.0,2.7)$);
\node[subtitle] at ($(disp.center)+(0,2.05)$) {No additive structure};

\node at ($(disp.center)+(0,1.3)$) {$D_\Omega$};
\foreach \x/\y in {-1.45/0.15,-1.0/0.7,-0.55/0.45,-0.15/-0.15,0.35/0.1,0.85/0.55,1.3/0.25,
                   -1.2/-0.55,-0.75/-0.95,-0.2/-0.7,0.35/-1.0,0.9/-0.6,1.25/-0.1,0.0/0.95}
    \node[dot] at ($(disp.center)+(\x,\y)$) {};

\node[formula, fill=yellow!28] at ($(disp.center)+(0,-2.8)$) {$K_{\max}=A_q(n,2t+1)$};

\node[title, text=blue!60!black] at ($(coll.center)+(0,3.25)$) {Subspace Collapse};
\draw[black!70,line width=0.7pt] ($(coll.center)+(-2.0,2.7)$) -- ($(coll.center)+(2.0,2.7)$);
\node[subtitle] at ($(coll.center)+(0,2.05)$) {Contains additive};
\node[subtitle, font=\small] at ($(coll.center)+(0,1.45)$) {subspace $W\le V$};

\node at ($(coll.center)+(0,0.8)$) {$D_\Omega$};

\draw[dashed,black!70,line width=0.8pt]
    ($(coll.center)+(-1.85,-0.55)$)
    .. controls ($(coll.center)+(-2.15,0.55)$) and ($(coll.center)+(-1.25,1.15)$) ..
    ($(coll.center)+(-0.35,0.95)$)
    .. controls ($(coll.center)+(0.9,1.45)$) and ($(coll.center)+(1.75,0.95)$) ..
    ($(coll.center)+(1.85,0.05)$)
    .. controls ($(coll.center)+(2.05,-1.15)$) and ($(coll.center)+(0.9,-1.55)$) ..
    ($(coll.center)+(-0.25,-1.35)$)
    .. controls ($(coll.center)+(-1.35,-1.65)$) and ($(coll.center)+(-2.15,-1.25)$) ..
    ($(coll.center)+(-1.85,-0.55)$);

\filldraw[fill=yellow!45,draw=black!70,line width=0.8pt]
    ($(coll.center)+(-0.95,-0.55)$)
    .. controls ($(coll.center)+(-0.45,0.15)$) and ($(coll.center)+(0.3,0.55)$) ..
    ($(coll.center)+(0.95,0.35)$)
    .. controls ($(coll.center)+(1.2,-0.15)$) and ($(coll.center)+(0.55,-0.85)$) ..
    ($(coll.center)+(-0.2,-0.95)$)
    .. controls ($(coll.center)+(-0.7,-0.9)$) and ($(coll.center)+(-1.15,-0.75)$) ..
    ($(coll.center)+(-0.95,-0.55)$);

\node at ($(coll.center)+(0,-0.2)$) {$W$};

\foreach \x/\y in {-1.45/-0.45,-1.2/0.15,-0.95/-0.85,-0.7/-0.45,-0.25/0.0,0.15/-0.55,0.45/0.3,0.75/-0.15,1.25/0.1,
                   0.0/-0.95,0.95/-0.95,1.35/0.75,-0.75/0.75,0.55/0.85}
    \node[dot] at ($(coll.center)+(\x,\y)$) {};

\node[formula, fill=blue!18] at ($(coll.center)+(0,-2.8)$) {$K_{\max}\le q^{\,n-r}$};

\node[title, text=green!35!black] at ($(trade.center)+(0,3.25)$) {Harmonic Tradeoff};
\draw[black!70,line width=0.7pt] ($(trade.center)+(-2.0,2.7)$) -- ($(trade.center)+(2.0,2.7)$);
\node[subtitle] at ($(trade.center)+(0,2.05)$) {Dual-active sets $\Omega_X,\Omega_Z$};
\node[subtitle, font=\small] at ($(trade.center)+(0,1.45)$) {both $\gamma_X,\gamma_Z>0$};

\begin{scope}
\clip ($(trade.center)+(-1.05,0.0)$) circle (1.02cm);
\fill[blue!18] ($(trade.center)+(-1.05,0.0)$) circle (1.02cm);
\end{scope}

\begin{scope}
\clip ($(trade.center)+(0.55,0.0)$) circle (1.02cm);
\fill[green!25] ($(trade.center)+(0.55,0.0)$) circle (1.02cm);
\end{scope}

\draw[black!80,line width=0.8pt] ($(trade.center)+(-1.05,0.0)$) circle (1.02cm);
\draw[black!80,line width=0.8pt] ($(trade.center)+(0.55,0.0)$) circle (1.02cm);

\begin{scope}
  \clip ($(trade.center)+(-1.05,0.0)$) circle (1.02cm);
  \clip ($(trade.center)+(0.55,0.0)$) circle (1.02cm);
  \fill[blue!20!green!30]
    ($(trade.center)+(-2.1,-1.5)$) rectangle ($(trade.center)+(1.6,1.5)$);
\end{scope}

\node at ($(trade.center)+(-1.05,0.0)$) {$\Omega_X$};
\node at ($(trade.center)+(0.55,0.0)$) {$\Omega_Z$};

\node[formula, fill=green!18] at ($(trade.center)+(0,-2.8)$) {$R\le 1-\frac{\gamma_X+\gamma_Z}{2}$};

\end{tikzpicture}
\caption{Geometric classification of biased quantum capacity (Theorem~\ref{thm:geometric_classification}). The achievable logical dimension is governed by the additive geometry of the noise difference set $D_\Omega=(\Omega-\Omega)\setminus\{0\}$. Dispersive noise without additive structure yields classical packing capacity, additive subspaces induce dimensional collapse, and simultaneous bit--phase protection produces a harmonic rate tradeoff.}
\label{fig:geometric_classification}
\end{figure*}

\noindent\textbf{Geometric interpretation.} The three regimes admit a unified structural reading. In the dispersive regime, capacity is governed purely by extremal independence in an additive Cayley graph, reducing under uniform locality to classical coding theory. In the subspace-collapse regime, additive symmetry inside $D_\Omega$ generates large cliques in $\Gamma_\Omega$, forcing an exponential reduction in achievable logical dimension. This regime naturally encompasses stabilizer and CSS constructions. In such codes the logical space is defined as the joint eigenspace of an abelian Pauli subgroup, which algebraically imposes additive symmetry on the spectral support. Within the harmonic framework this symmetry appears geometrically as the presence of an additive subspace inside the noise difference set $D_\Omega$, producing the dimensional bound $K \le q^{\,n-r}$ that mirrors the standard dimension formula for stabilizer codes. In the dual tradeoff regime, simultaneous localization in the computational and Fourier domains incurs a multiplicative penalty, reflecting a discrete harmonic uncertainty principle.

\medskip
\noindent Taken together, these regimes exhaust the structural possibilities for phase-biased quantum error correction within the harmonic translation framework developed here. All capacity phenomena derived in Sections~3--6 reduce to additive separation, additive symmetry, or dual-domain localization. Logical dimension is therefore governed not by stabilizer algebra, linearity, or affine closure, but by the additive geometry of the noise difference set. Once the additive structure of $D_\Omega$ is fixed, the achievable logical dimension is completely determined.

\section{Cat-Qubit Noise Model and Logical Capacity under Phase-Biased Hardware}
\label{sec:catqubit}

\subsection{Physical Motivation}
\label{subsec:catqubit_motivation}

Recent experimental work on stabilised cat qubits provides a concrete
physical realisation of the strongly biased phase-noise regime
considered in this paper. Cat qubits stabilised via two-photon
dissipation exhibit an \emph{exponential} suppression of bit-flip
errors~\cite{GuillaudMirrahimi2019,Puri2020} as the mean photon number $\bar{n}$ increases: the bit-flip
rate satisfies
\[
   \Gamma_X \;\sim\; \kappa_2\,\bar{n}\,e^{-2\bar{n}},
\]
where $\kappa_2$ is the two-photon dissipation rate.
Early experiments demonstrated the exponential scaling of bit-flip suppression, reaching lifetimes of approximately $1$\,ms for $\bar{n} \approx 4$ \cite{Lescanne2020}. 
By mitigating interactions, more recent work has extended these 
lifetimes significantly, attaining bit-flip times on the order of $100$\,s 
for larger cat states with $\bar{n} \approx 40$ \cite{Berdou2023}.

Experiments~\cite{Lescanne2020} report bit-flip lifetimes exceeding $1$\,ms for
$\bar{n}\approx 4$.
Meanwhile the phase-flip rate grows only \emph{linearly}:
\[
   \Gamma_Z \;\sim\; \kappa_1\,\bar{n},
\]
where $\kappa_1$ denotes single-photon loss. The noise-bias parameter
\[
   \eta \;=\; \frac{p_Z}{p_X}
     \;\sim\; \frac{\kappa_1}{\kappa_2}\,e^{2\bar{n}}
\]
therefore grows exponentially with $\bar{n}$ and exceeds $10^{4}$
even for modest photon numbers.

In the limit $\eta\to\infty$ the effective error model becomes
\emph{pure phase noise}: each physical qubit suffers independent
$Z$-errors with probability $p$ while $X$ and $Y$ errors are
absent. This is precisely the dispersive regime described by
Theorem~\ref{thm:geometric_classification}(i). The physical hardware therefore realises the phase-local noise model analysed in this work, allowing the harmonic translation framework of Sections~\ref{sec:harmonic}--\ref{sec:graph_reformulation} to be instantiated on a concrete experimental platform.

For concreteness we specialize to the binary setting $V=\mathbb{F}_2^n$, using the standard Hamming weight $\mathrm{wt}(\cdot)$ and distance $d_H(\cdot,\cdot)$ notation introduced earlier.

\subsection{Cayley-Graph Structure and Capacity Identity}
\label{subsec:catqubit_cayley}

\begin{lemma}[Cayley-graph structure under uniform cat-qubit phase noise]
\label{lem:catq_cayley}
Consider an array of $n$ cat qubits in the strongly biased regime
where phase-flip errors dominate and bit-flip errors are negligible.
Let $t\ge 1$ and define the admissible error set
\[
   \Omega \;=\; E_t
     \;=\; \{e\in\mathbb{F}_2^n : \mathrm{wt}(e)\le t\},
\]
representing phase errors affecting at most $t$ qubits. Then:
\begin{enumerate}
\item[\textup{(i)}] The difference set is
  $\Omega-\Omega = E_{2t}$.
\item[\textup{(ii)}] The additive Cayley graph
  $G=\mathrm{Cay}\!\bigl(\mathbb{F}_2^n,\,E_{2t}\setminus\{0\}\bigr)$
  has vertex set $\mathbb{F}_2^n$ with $x\sim y$ if and only if
  $0<d_H(x,y)\le 2t$.
\item[\textup{(iii)}] Independent sets in $G$ are precisely binary
  codes of minimum distance at least $2t+1$.
\item[\textup{(iv)}] The independence number satisfies
  $\alpha(G) = A_2(n,2t+1)$.
\end{enumerate}
\end{lemma}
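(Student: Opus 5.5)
The plan is to verify the four items in order; each one feeds the next.

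For (i), we prove two inclusions. Since $V=\mathbb{F}_2^n$ has characteristic two, subtraction coincides with addition. Subadditivity of Hamming weight, $\mathrm{wt}(e_1+e_2)\le \mathrm{wt}(e_1)+\mathrm{wt}(e_2)\le 2t$, then gives $\Omega-\Omega\subseteq E_{2t}$. For the reverse inclusion, take any $v$ with $\mathrm{wt}(v)\le 2t$ and split its support into two disjoint sets of sizes $\lceil \mathrm{wt}(v)/2\rceil$ and $\lfloor \mathrm{wt}(v)/2\rfloor$, both at most $t$. Let $e_1,e_2$ be the indicator vectors of these two sets. Then $e_1,e_2\in E_t$ and $e_1-e_2=e_1+e_2=v$. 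This uses $n\ge 2t$ only implicitly: if $2t>n$ then $E_{2t}=E_n$ and the same splitting still works. The splitting argument is the only step with real content, and it is the step I expect to need the most care, in particular the edge case $\mathrm{wt}(v)=2t+1>n$, which cannot occur because weights are bounded by $n$.

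For (ii), we substitute (i) into the definition of $\Gamma_\Omega=\mathrm{Cay}(V,D_\Omega)$ from Section~\ref{subsec:cayley_formulation}. This gives $D_\Omega=E_{2t}\setminus\{0\}$. Distinct vertices $x,y$ are adjacent exactly when $x-y\in E_{2t}\setminus\{0\}$. Since $d_H(x,y)=\mathrm{wt}(x-y)$, this condition reads $0<d_H(x,y)\le 2t$.

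For (iii), a set $S$ is independent exactly when no two distinct elements are adjacent. By (ii), this means every pair of distinct $s_1,s_2\in S$ satisfies $d_H(s_1,s_2)\ge 2t+1$, that is, $d(S)\ge 2t+1$. This agrees with Corollary~\ref{cor:independence_characterization} combined with the uniform-locality case of Corollary~\ref{cor:correction}.

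For (iv), we maximize $|S|$ over the independent sets identified in (iii). The maximum size of a binary code of length $n$ with minimum distance at least $2t+1$ is $A_2(n,2t+1)$ by definition, so $\alpha(G)=A_2(n,2t+1)$. This is consistent with Corollaries~\ref{cor:exact_uniform_capacity} and~\ref{cor:exact_capacity_identity} at $q=2$.
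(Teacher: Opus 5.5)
Your proposal is correct and follows the paper's proof essentially step for step. For (i) you prove two inclusions, using subadditivity of Hamming weight and splitting $\mathrm{supp}(v)$ into two disjoint pieces of size at most $t$; for (ii)--(iv) you unwind the Cayley-graph adjacency, the definition of independence, and the definition of $A_2(n,2t+1)$. Your side worries are unnecessary: the splitting never uses any relation between $n$ and $t$, and $v\in E_{2t}$ already forces $\mathrm{wt}(v)\le 2t$, so no weight-$(2t+1)$ edge case can arise.
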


\begin{proof}
\textbf{(i)}
By definition,
\[
\begin{aligned}
\Omega-\Omega
   &= \{\,e\oplus e' \mid e,e'\in\mathbb{F}_2^n, \\
   &\qquad \mathrm{wt}(e)\le t,\;
            \mathrm{wt}(e')\le t \,\},
\end{aligned}
\]

using $-e'=e'$ in $\mathbb{F}_2^n$.

\emph{Inclusion $\Omega-\Omega\subseteq E_{2t}$.}
For any $e,e'\in E_t$ the triangle inequality gives
$\mathrm{wt}(e\oplus e') \le \mathrm{wt}(e)+\mathrm{wt}(e') \le 2t$,
so $e\oplus e'\in E_{2t}$.

\emph{Inclusion $E_{2t}\subseteq\Omega-\Omega$.}
Let $v\in E_{2t}$, so $\mathrm{wt}(v)\le 2t$. Write
$v=\mathbf{1}_I$ where $I=\mathrm{supp}(v)\subseteq[n]$ with
$|I|\le 2t$. Partition $I$ into disjoint subsets $I_1,I_2$ with
$|I_1|,|I_2|\le t$, set $e=\mathbf{1}_{I_1}$ and
$e'=\mathbf{1}_{I_2}$. Then $e,e'\in E_t$ and
$e\oplus e' = \mathbf{1}_{I_1\cup I_2} = v$.

\textbf{(ii)}
The Cayley graph $G=\mathrm{Cay}(\mathbb{F}_2^n,\,E_{2t}\setminus\{0\})$
has edge set
$\{x,y\}\in E(G)\Leftrightarrow x\oplus y\in E_{2t}\setminus\{0\}
\Leftrightarrow 0<d_H(x,y)\le 2t$,
which is the stated claim.

\textbf{(iii)}
A subset $S\subseteq\mathbb{F}_2^n$ is independent in $G$ if and
only if $d_H(x,y)\ge 2t+1$ for all distinct $x,y\in S$, which is
exactly the defining property of a binary code of minimum distance
at least $2t+1$.

\textbf{(iv)}
The independence number $\alpha(G)$ equals the maximum cardinality
of an independent set in $G$. By (iii), this equals $A_2(n,2t+1)$.
\end{proof}

\begin{remark}[Connection with the harmonic framework]
\label{rem:catq_harmonic}
In the language of Section~\ref{sec:graph_reformulation}, the
admissible error family $\Omega=E_t$ induces a partition of the dual
group $\widehat{\mathbb{F}_2^n}\cong\mathbb{F}_2^n$ into
``confusable'' classes via the difference set $\Omega-\Omega$. A
valid spectral support for a Fourier-support quantum code must place
at most one element in each confusable class, recovering the capacity
identity
\[
   K_{\max}(\Omega) \;=\; \alpha(G) \;=\; A_2(n,2t+1)
\]
from Corollary~\ref{cor:exact_capacity_identity} applied to the
concrete cat-qubit noise model.
\end{remark}

\subsection{Nonlinear Advantage and Threshold Preservation}
\label{subsec:catqubit_nonlinear}

\begin{lemma}[Separation between affine and nonlinear spectral supports]
\label{lem:catq_separation}
Let $n\ge1$, $t\ge1$, and let $S\subseteq\mathbb{F}_2^n$ satisfy
$d(S)\ge 2t+1$.
\begin{enumerate}
\item[\textup{(i)}] The Fourier-support code $\mathcal{C}(S)$
corrects all phase errors of weight at most $t$.

\item[\textup{(ii)}] If $A_2(n,2t+1)>B_2(n,2t+1)$, then nonlinear
supports achieve strictly larger logical dimension than any
affine support.
\end{enumerate}
\end{lemma}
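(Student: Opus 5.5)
Both parts reduce to results already established, so the plan is to assemble them in the binary specialization $V=\mathbb{F}_2^n$ rather than build anything new.

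\textbf{Part (i).} I will apply Corollary~\ref{cor:correction}: $\mathcal{C}(S)$ corrects the family $\mathcal{E}_t$ if and only if $(S-S)\cap(E_t-E_t)=\{0\}$. By Lemma~\ref{lem:catq_cayley}(i), $E_t-E_t=E_{2t}$ in the binary case. So it suffices to show that no nonzero difference of two elements of $S$ has weight at most $2t$.

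Take $s_1\neq s_2$ in $S$. Then $s_1-s_2=s_1\oplus s_2$ has weight $d_H(s_1,s_2)\ge d(S)\ge 2t+1$. Hence $s_1-s_2\notin E_{2t}$, and the only element of $S-S$ lying in $E_{2t}$ is $0$, which proves (i).

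Equivalently, $S$ is an independent set in the Cayley graph $G$ of Lemma~\ref{lem:catq_cayley}(iii), and Theorem~\ref{thm:general_phase_condition} with $\Omega=E_t$ gives the same conclusion.

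\textbf{Part (ii).} I will first bound every affine support. If $S=v+U$ is affine and corrects $t$ phase errors, then $S-S=U$, and the affine analysis of Section~\ref{subsec:affine_supports} shows that $U\cap E_{2t}=\{0\}$. Thus $U$ is a linear binary code with minimum distance at least $2t+1$, so $|S|=|U|\le B_2(n,2t+1)$.

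Next I will exhibit a larger nonaffine support. Choose a binary code $S^\ast$ of length $n$ with $d(S^\ast)\ge 2t+1$ and $|S^\ast|=A_2(n,2t+1)$; it exists by the definition of $A_2$. Part (i) applied to $S^\ast$ shows that $\mathcal{C}(S^\ast)$ corrects all phase errors of weight at most $t$, and its logical dimension is $|S^\ast|=A_2(n,2t+1)>B_2(n,2t+1)$.

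Since every affine support has size at most $B_2(n,2t+1)$, the set $S^\ast$ cannot be affine. So it is a genuinely nonlinear support that strictly beats all affine ones. This is Theorem~\ref{thm:strict_separation} specialized to $q=2$ and $d=2t+1$.

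\textbf{Main obstacle.} The only real subtlety is the translation step $S-S=U$ for affine $S$. I must make sure that the maximum over affine supports is exactly the maximum over linear codes: translating by $v$ preserves both size and distance, so the bound is $B_2$ and not something larger. Everything else is direct substitution into earlier results.
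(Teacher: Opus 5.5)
Your proof is correct and follows essentially the same route as the paper: part (i) uses Corollary~\ref{cor:correction} together with the weight bound $\mathrm{wt}(s\oplus s')\ge 2t+1$, and part (ii) bounds any affine support $v+C$ by $B_2(n,2t+1)$ and exhibits an optimal code of size $A_2(n,2t+1)$. You also make explicit two points the paper leaves implicit: that $S-S=U$ forces $U\cap E_{2t}=\{0\}$, so $U$ has minimum distance at least $2t+1$, and that the optimal code $S^\ast$ cannot be affine.
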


\begin{proof}
\textbf{(i)}
By Corollary~\ref{cor:correction}, exact phase correction requires
\[
(S-S)\cap E_{2t}=\{0\}.
\]
For distinct $s,s'\in S$ we have
$d_H(s,s')=\mathrm{wt}(s\oplus s')\ge d(S)\ge 2t+1$,
so $s\oplus s'\notin E_{2t}$ and the condition holds.

\textbf{(ii)}
If $S$ is affine, then $S=v+C$ for some linear code
$C\le\mathbb{F}_2^n$, and $|S|=|C|\le B_2(n,2t+1)$.
Without the affine restriction, admissible supports are arbitrary
binary codes with $d(S)\ge 2t+1$, whose maximal size is
$A_2(n,2t+1)$. Whenever $A_2(n,2t+1)>B_2(n,2t+1)$,
a nonlinear code achieving $A_2$ yields strictly larger logical
dimension.
\end{proof}

The dimensional separation between affine and nonlinear spectral supports discussed in Section~\ref{subsec:nonlinear_constructions}
applies directly to cat-qubit arrays. In particular, the explicit binary constructions for
$n=8$ and $n=16$ presented in Section~\ref{subsec:nonlinear_constructions}
provide concrete Fourier-support codes whose logical dimension exceeds that of any affine construction under identical phase-distance constraints.

\begin{remark}[Threshold preservation]
\label{rem:catq_threshold}
The error-correction threshold depends only on the
distance condition $d(S)\ge2t+1$ and not on the algebraic
structure of $S$. Consequently, nonlinear spectral supports
achieve larger logical dimension without altering the
set of correctable phase errors or the associated
error-correction threshold.
\end{remark}

\subsection{Capacity under Correlated Phase Noise}
\label{subsec:catqubit_correlated}

\begin{lemma}[Capacity under correlated phase noise]
\label{lem:catq_correlated}
Consider a one-dimensional array of $n$ cat qubits with periodic
boundary conditions. To capture the dominant phase-flip noise in cat
qubits together with possible short-range correlations arising from
crosstalk or residual $ZZ$ couplings, we adopt a noise model that
combines independent phase flips with nearest-neighbor correlated
phase errors. Formally, define
\[
\begin{aligned}
\Omega_{\mathrm{corr}}
  &= \{0\}
     \cup \{e_i : 1 \le i \le n\} \\
  &\quad \cup \{e_i \oplus e_{i+1} : 1 \le i \le n-1\}
     \cup \{e_n \oplus e_1\},
\end{aligned}
\]
where $e_i$ denotes the $i$-th standard basis vector of
$\mathbb{F}_2^n$.
\begin{enumerate}
\item[\textup{(i)}]
  The difference set
  $D_{\mathrm{corr}}
    = (\Omega_{\mathrm{corr}} - \Omega_{\mathrm{corr}})\setminus\{0\}$
  contains vectors of Hamming weights $1$, $2$, $3$, and $4$.
  For $n = 8$ it has cardinality $|D_{\mathrm{corr}}| = 96$,
  with weight distribution
  $(w_1, w_2, w_3, w_4) = (8,\,28,\,40,\,20)$,
  and satisfies $D_{\mathrm{corr}} \subsetneq E_4 \setminus \{0\}$.
\item[\textup{(ii)}]
  The associated Cayley graph is
  $G_{\mathrm{corr}} = \mathrm{Cay}(\mathbb{F}_2^n, D_{\mathrm{corr}})$.
  For $n = 8$ it is $96$-regular on $256$ vertices,
  with graph density $96/255 \approx 0.376$.
\item[\textup{(iii)}]
  The maximal logical dimension under this correlated noise model
  equals
  $K_{\max}(\Omega_{\mathrm{corr}}) = \alpha(G_{\mathrm{corr}})$.
\item[\textup{(iv)}]
  For $n = 8$, an exact branch-and-bound computation yields
  \[
    \alpha(G_{\mathrm{corr}}) = 9,
  \]
  consistent with the semidefinite upper bound
  $\vartheta(G_{\mathrm{corr}}) \approx 13.47$.
\end{enumerate}
\end{lemma}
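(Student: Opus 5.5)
The plan is to treat items (i)--(iii) by direct counting together with earlier results. Item (iv) is a certified computation, and I expect it to be the only real obstacle.

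For (i), I would first enumerate $\Omega_{\mathrm{corr}}$. It consists of $0$, the $n$ unit vectors, and the $n$ cyclic edge vectors $e_i\oplus e_{i+1}$ (indices mod $n$), so $|\Omega_{\mathrm{corr}}|=2n+1$. Since $-v=v$ in $\mathbb{F}_2^n$, $\Omega-\Omega$ is the set of sums $\omega\oplus\omega'$, and I would classify these by the types of the two summands:
\begin{itemize}
\item $0\oplus e_i$ gives all $n$ weight-one vectors. The sum $e_i\oplus(e_i\oplus e_{i+1})=e_{i+1}$ adds nothing new.
\item $e_i\oplus e_j$ with $i\neq j$ gives all $\binom{n}{2}$ weight-two vectors. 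The edge vectors are among these.
\item $e_i\oplus(e_j\oplus e_{j+1})$ with $i\notin\{j,j+1\}$ gives weight-three vectors of the form ``edge plus an external point''. There are $n(n-2)$ ordered choices. The consecutive triples $\{j,j+1,j+2\}$ are each produced exactly twice, once from each of their two edges, so the number of distinct vectors is $n(n-2)-n=n(n-3)$.
\item Sums of two distinct edge vectors give weight four when the edges are disjoint, which happens for $\binom{n}{2}-n$ unordered pairs. Adjacent edges give weight two, already counted.
\end{itemize}
For $n=8$ these counts are $(8,28,40,20)$, with total $96$. Every element has weight at most $4$, so $D_{\mathrm{corr}}\subseteq E_4\setminus\{0\}$. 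The inclusion is strict by comparing sizes, $96<8+28+56+70$; for example, the non-consecutive weight-three vector $e_1\oplus e_3\oplus e_5$ is missing.

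For (ii), I would note that a Cayley graph $\mathrm{Cay}(V,D)$ with symmetric $D\not\ni 0$ is $|D|$-regular. Here $D_{\mathrm{corr}}$ is symmetric automatically in characteristic two. This gives $96$-regularity on $2^8=256$ vertices and density $96/255$. Item (iii) is a direct instance of Corollary~\ref{cor:exact_capacity_identity} with $\Omega=\Omega_{\mathrm{corr}}$.

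The main work is (iv), and I would split it into a lower and an upper bound.
\begin{itemize}
\item \textbf{Lower bound.} Exhibit an explicit $9$-element set $S\subseteq\mathbb{F}_2^8$ and verify that all $\binom{9}{2}=36$ pairwise sums avoid $D_{\mathrm{corr}}$. Concretely, each sum must have weight at least $5$, or weight $3$ with non-consecutive support, or weight $4$ not a union of two disjoint cyclic edges.
\item \textbf{Upper bound.} Show $\alpha\le 9$ by exhaustive branch-and-bound, reduced by symmetry. By translation invariance of the Cayley graph we may assume $0\in S$. The rest of $S$ then lies among the $256-97=159$ non-neighbours of $0$, and the dihedral group of the cycle acts on these and can be used to prune branches further. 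I would bound each branch by a greedy colouring, or by the clique cover coming from the cosets in Lemma~\ref{lem:coset_clique}, and would report the search as a reproducible computation.
\end{itemize}

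Finally, $\vartheta\approx13.47$ comes from the Fourier-diagonalized Lov\'asz SDP described in Section~\ref{subsec:zero_error_theta}. Because $\alpha\le\vartheta$ always holds, agreement with this bound is just the check $9\le13.47$. The step I expect to resist a clean human-readable proof is the upper bound $\alpha\le9$. The theta bound only gives $13$, so a purely analytic certificate is unavailable, and I would rely on the symmetry-reduced exhaustive search.
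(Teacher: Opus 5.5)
Your plan is correct and follows the paper's proof: (i)--(iii) come from listing the sums $\omega\oplus\omega'$ and applying Corollary~\ref{cor:exact_capacity_identity}, and (iv) rests on an explicit independent $9$-set together with an exhaustive branch-and-bound for $\alpha\le 9$, with $\vartheta\approx 13.47$ used only as a consistency check (the paper's set is $\{0,21,42,91,124,142,183,201,227\}$, and all $36$ of its pairwise sums do avoid $D_{\mathrm{corr}}$). Your closed-form counts $n,\binom{n}{2},n(n-3),\binom{n}{2}-n$ are a tidier substitute for the paper's brute-force enumeration in (i), with one justification to add: for $n=8$ a $4$-subset of the cycle splits into two disjoint cycle edges in at most one way, because its induced subgraph is a union of paths, so the $20$ disjoint edge pairs give $20$ distinct vectors.
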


\begin{proof}
\textbf{(i)}
Since every element of $\Omega_{\mathrm{corr}}$ has Hamming weight
at most $2$, every pairwise sum $a \oplus b$ has weight at most $4$,
so $D_{\mathrm{corr}} \subseteq E_4 \setminus \{0\}$.
Explicit enumeration of all XOR sums for $n = 8$ yields
$|D_{\mathrm{corr}}| = 96$, with weights $1$, $2$, $3$, $4$
occurring with multiplicities $8$, $28$, $40$, $20$ respectively.
Since $|E_4 \setminus \{0\}| = 162 > 96$, the inclusion is strict.

\textbf{(ii)--(iii)}
By Theorem~\ref{thm:general_phase_condition}, a Fourier-support
code $\mathcal{C}(S)$ exactly corrects $\mathcal{E}_{\Omega_{\mathrm{corr}}}$
if and only if
$(S - S) \cap (\Omega_{\mathrm{corr}} - \Omega_{\mathrm{corr}}) = \{0\}$,
which holds precisely when $S$ is an independent set in
$G_{\mathrm{corr}}$.
Hence $K_{\max}(\Omega_{\mathrm{corr}}) = \alpha(G_{\mathrm{corr}})$
by Corollary~\ref{cor:exact_capacity_identity}.

\textbf{(iv)}
For $n = 8$, the graph $G_{\mathrm{corr}}$ has $256$ vertices and is
$96$-regular. An exact maximum-independent-set computation via
branch-and-bound over $\mathbb{F}_2^8$ yields
$\alpha(G_{\mathrm{corr}}) = 9$,
with realising independent set
$\{0, 21, 42, 91, 124, 142, 183, 201, 227\}$
(vertices identified with their binary representations in
$\{0,\ldots,255\}$).
Independence is verified directly: no two elements of this set
differ by a vector in $D_{\mathrm{corr}}$.
The value $\alpha = 9$ is consistent with the semidefinite upper
bound $\vartheta(G_{\mathrm{corr}}) \approx 13.47 \ge \alpha(G_{\mathrm{corr}})$
obtained via the Lov\'{a}sz theta program of
Section~\ref{subsec:zero_error_theta}.
\end{proof}

\begin{remark}[Additive structure and capacity collapse under correlated noise]
\label{rem:regime_transition}
The contrast between the uniform and correlated noise models on
$n = 8$ cat qubits illustrates how the additive geometry of
$D_{\Omega}$ determines logical capacity.
Under uniform phase noise with $t = 1$,
$D_{\mathrm{unif}} = E_2 \setminus \{0\}$ contains no additive
subspace of positive dimension, placing the hardware in the
dispersive regime of Theorem~\ref{thm:geometric_classification}\textup{(i)},
with $K_{\max} = A_2(8,3) = 20$.

Introducing nearest-neighbour correlations enlarges the difference
set to $|D_{\mathrm{corr}}| = 96$.
Moreover, $D_{\mathrm{corr}}$ now contains non-trivial additive
subspaces: the subspace
$W = \mathrm{span}\{e_1, e_2, e_3\} \leq \mathbb{F}_2^8$
satisfies $W \setminus \{0\} \subseteq D_{\mathrm{corr}}$,
as verified by direct inspection of all seven nonzero elements
of $W$.
Corollary~\ref{cor:structural_collapse} therefore applies with
$r = \dim W = 3$, yielding the algebraic upper bound
\[
  K_{\max}(\Omega_{\mathrm{corr}}) \;\le\; 2^{8-3} = 32.
\]
The exact computation of part~\textup{(iv)} gives
$\alpha(G_{\mathrm{corr}}) = 9$, which is sharper than this bound
and confirms a severe capacity collapse: the achievable logical
dimension falls from $20$ to $9$, below even the best affine
construction $B_2(8,3) = 16$.
The algebraic bound of Corollary~\ref{cor:structural_collapse}
identifies the collapse mechanism --- additive structure in
$D_{\mathrm{corr}}$ --- but does not determine the exact value;
that requires the independence-number computation of
part~\textup{(iv)}.
The noise structure, not its magnitude, governs the regime.
\end{remark}

\subsection{Logical Capacity of Cat-Qubit Arrays}
\label{subsec:catqubit_main}

We summarize the logical capacity of cat-qubit arrays under
phase-biased noise models within the harmonic translation framework.

\begin{theorem}[Logical capacity of cat-qubit arrays under phase-biased noise]
\label{thm:catq_logical_capacity}
Consider an array of $n$ cat qubits in the strongly biased regime
where phase errors dominate. Let $t$ denote the maximal number of
correctable phase flips.

\begin{enumerate}

\item[\textup{(i)}] \textbf{Uniform phase noise.}
The maximal logical dimension equals
\[
K_{\max}(n,t)=A_2(n,2t+1).
\]
Whenever $A_2(n,2t+1)>B_2(n,2t+1)$, nonlinear spectral supports
achieve strictly larger logical dimension than any affine construction.
For example,

\[
\renewcommand{\arraystretch}{1.2}
\begin{array}{c|c|c|c}
(n,t) & \text{Required } d & B_2(n,d) & A_2(n,d) \\
\hline
(8,1)  & 3 & 16  & 20  \\
(16,2) & 6 & 128 & 256
\end{array}
\]

The values for $A_2(n,d)$ correspond to the best known classical binary codes reported; in particular, the $(8,20,3)$ Julin~\cite{Julin1965} and $(16,256,6)$ Nordstrom--Robinson codes~\cite{NordstromRobinson1967}.

\item[\textup{(ii)}] \textbf{Correlated phase noise.}
If the noise model includes nearest-neighbour correlated phase
errors, the maximal logical dimension equals
\[
K_{\max}(\Omega_{\mathrm{corr}})
=
\alpha(G_{\mathrm{corr}}).
\]
For $n=8$ with $t=1$ and nearest-neighbour correlations,
\[
\alpha(G_{\mathrm{corr}})=9,
\]
a reduction from $K_{\max}=20$ under uniform noise to $K_{\max}=9$
under correlated noise, consistent with the additive-subspace collapse
of Corollary~\ref{cor:structural_collapse} applied to~$D_{\mathrm{corr}}$.

\item[\textup{(iii)}] \textbf{Threshold preservation.}
The error-correction threshold depends only on the distance
condition $d(S)\ge2t+1$ and is independent of the algebraic
structure of the spectral support.

\end{enumerate}
\end{theorem}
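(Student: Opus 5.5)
The proof is an assembly of results already established for the binary setting $V=\mathbb{F}_2^n$, so I would prove the three parts separately, each by citing the relevant earlier lemma and then checking the numerical claims.

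\textbf{Part (i).} First I would invoke Lemma~\ref{lem:catq_cayley}. Part (i) of that lemma gives $\Omega-\Omega=E_{2t}$, and part (iv) gives $\alpha(G)=A_2(n,2t+1)$. Combined with Corollary~\ref{cor:exact_capacity_identity}, or directly with Corollary~\ref{cor:exact_uniform_capacity} at $q=2$, this yields $K_{\max}(n,t)=A_2(n,2t+1)$. The separation statement is Lemma~\ref{lem:catq_separation}(ii), equivalently Theorem~\ref{thm:strict_separation}.

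For the table I would argue row by row, citing the classical data from Section~\ref{subsec:nonlinear_constructions}.
\begin{itemize}
\item For $(8,1)$: $A_2(8,3)=20$ is realized by the Julin code, while $B_2(8,3)=16$ because no binary linear $[8,5,3]$ code exists.
\item For $(16,2)$: the Nordstrom--Robinson $(16,256,6)$ code has $d\ge 5=2t+1$. By Lemma~\ref{lem:catq_separation}(i) it is therefore an admissible support, giving $K\ge 256$.
\end{itemize}

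\textbf{Main obstacle: the $(16,2)$ row.} The table lists the distance as $6$ rather than the correction requirement $2t+1=5$. To make the affine comparison valid at the distance that correction actually requires, I need $B_2(16,5)\le 128$. I would obtain this from the standard bound tables: the best binary linear codes of length $16$ with $d\ge5$ have dimension $7$, for example the extended BCH $[16,7,6]$ code. With that citation the row reads: nonlinear supports achieve $256$, affine supports at most $128$. I would state explicitly that the $A_2$ entries rest on the cited classical constructions and tables, not on new arguments.

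\textbf{Part (ii).} This follows from Lemma~\ref{lem:catq_correlated}. Part (iii) of that lemma gives $K_{\max}(\Omega_{\mathrm{corr}})=\alpha(G_{\mathrm{corr}})$, and part (iv) gives $\alpha(G_{\mathrm{corr}})=9$ for $n=8$ via the exhibited independent set and the branch-and-bound certificate. The comparison $20\to 9$ combines this with part (i) at $(8,1)$. The phrase ``consistent with'' the collapse is supported by Remark~\ref{rem:regime_transition}: there $W=\mathrm{span}\{e_1,e_2,e_3\}$ satisfies $W\setminus\{0\}\subseteq D_{\mathrm{corr}}$, so Corollary~\ref{cor:structural_collapse} gives $K_{\max}\le 32$, and $9\le 32$.

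\textbf{Part (iii).} By Corollary~\ref{cor:correction} and Theorem~\ref{thm:distance_equivalence}, the set of exactly correctable phase errors for $\mathcal{C}(S)$ is determined solely by whether $d(S)\ge 2t+1$. No algebraic property of $S$ enters this criterion. Likewise, by Proposition~\ref{prop:decoding_equivalence}, recovery reduces to classical nearest-codeword decoding on $S$. Since the threshold is defined through the family of correctable error patterns, it depends only on $t$ and not on whether $S$ is affine. This is Remark~\ref{rem:catq_threshold}, which I would restate formally.
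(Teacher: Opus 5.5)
You follow the paper's proof for all three parts: Lemma~\ref{lem:catq_cayley} and Lemma~\ref{lem:catq_separation} for (i), Lemma~\ref{lem:catq_correlated}(iii)--(iv) for (ii), and the distance-only form of the correction criterion for (iii). Those pieces are fine. The problem is the extra step you add for the $(16,2)$ row. You rightly noticed that this row is stated at $d=6$, while correcting $t=2$ phase flips only requires $d=5$. However, your repair, $B_2(16,5)\le 128$, is false. Shortening the binary quadratic-residue code $[17,9,5]$ gives a linear $[16,8,5]$ code, so $B_2(16,5)=2^8=256$. Dimension $9$ is ruled out by sphere packing, since $2^9(1+16+120)>2^{16}$. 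The extended BCH $[16,7,6]$ code is optimal for $d\ge 6$, not for $d\ge 5$. So an affine support correcting all phase errors of weight at most $2$ on $16$ qubits already has $256$ elements, and the Nordstrom--Robinson support gives no strict separation for $t=2$. As far as I know, no binary $(16,M,5)$ code with $M>256$ has been found either, so no citation can close this step as you have set it up.

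The paper's proof does not run into this only because it never checks the table; it just invokes Lemma~\ref{lem:catq_separation}(ii). Its Example~2 in Section~\ref{subsec:nonlinear_constructions} compares against $B_2(16,6)=128$, which is not the right baseline for $t=2$ correction. The row is really a distance-$6$ statement, and that is what you can honestly prove. By Theorem~\ref{thm:distance_equivalence}, the Nordstrom--Robinson support detects every phase error of weight at most $5$ (classically, corrects $2$ while detecting $3$) with $K=256$. Any affine support with the same property has $K\le B_2(16,6)=128$, by the classical nonexistence of a linear $[16,8,6]$ code. If you want a genuine separation for $t=2$ \emph{correction}, puncture the Nordstrom--Robinson code to a $(15,256,5)$ code and compare with $B_2(15,5)=128$. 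That bound holds because a linear $[15,8,5]$ code would extend by a parity bit to a $[16,8,6]$ code.
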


\begin{proof}
\textbf{Part (i).}
By Lemma~\ref{lem:catq_cayley}, the uniform phase-noise model
satisfies $K_{\max}(E_t)=\alpha(G)=A_2(n,2t+1)$.
Lemma~\ref{lem:catq_separation}(ii) establishes strict separation
from affine constructions whenever $A_2>B_2$.

\textbf{Part (ii).}
Lemma~\ref{lem:catq_correlated}(iii)--(iv) shows that
$K_{\max}(\Omega_{\mathrm{corr}})=\alpha(G_{\mathrm{corr}})$ and
that $\alpha(G_{\mathrm{corr}})=9$ for $n=8$.

\textbf{Part (iii).}
By Lemma~\ref{lem:catq_separation}(i), the correction condition
$(S-S)\cap E_{2t}=\{0\}$ depends only on $d(S)\ge 2t+1$ and not on the
algebraic structure of~$S$. The threshold probability
$P_{\mathrm{corr}}(n,t,p) = \sum_{j=0}^{t}\binom{n}{j}p^j(1-p)^{n-j}$
depends only on $n$, $t$, $p$.
\end{proof}
\section{Structural and Physical Implications}
\label{sec:implications}

This section examines the broader consequences of the geometric classification established in Section~\ref{sec:discussion}. Having identified additive geometry as the organizing principle of biased quantum capacity, we now consider how this perspective informs the interpretation of realistic noise models and architectural design choices. The goal is not to extend the classification, but to clarify its implications for phase-biased hardware and mixed-noise optimization.

\subsection{Physical and Structural Implications of the Geometric Capacity Regimes}
\label{subsec:physical_structural_implications}

The geometric classification of Theorem~\ref{thm:geometric_classification} admits a natural interpretation in contemporary phase-biased quantum hardware. Although our results are purely structural and combinatorial, the three regimes identified in the harmonic framework align closely with qualitatively distinct operational behaviors observed in realistic architectures. Rather than deriving device-level performance, the harmonic formulation provides a geometric lens through which known phenomena in biased quantum systems can be interpreted.

\paragraph{(i) Dispersive (Packing) Regime: Strong Phase Bias.}

This regime corresponds to platforms in which dephasing dominates bit-flip processes and the noise bias is effectively preserved during operation. In this limit the effective noise model approaches pure phase noise, and logical performance is governed primarily by protection against $Z$-type errors. Architectures designed for strongly biased noise, such as bias-tailored surface codes \cite{Tuckett2019} and the XZZX surface code \cite{BonillaAtaides2021}, demonstrate that exploiting such asymmetry can substantially improve logical performance relative to symmetric designs.

Within the harmonic framework this corresponds precisely to the dispersive regime of Theorem~\ref{thm:geometric_classification}(i): capacity reduces to independence in an additive Cayley graph and, under uniform locality, coincides exactly with the classical packing function $A_q(n,2t+1)$. Logical capacity is therefore determined entirely by additive separation in a single spectral domain.

\paragraph{(ii) Subspace-Collapse Regime: Correlated or Structured Dephasing.}

The second regime arises when phase noise exhibits correlation or geometric structure. Physically, this occurs when dephasing is induced by collective mechanisms or hardware constraints, so that errors are no longer independent across qubits. Examples include correlated dephasing generated by a common fluctuator \cite{Layden2020} and structured noise analyzed in surface-code settings \cite{NickersonBrown2019}. In such situations logical performance depends on the geometry of the admissible error set rather than solely on error weight.

In the harmonic formulation, additive structure inside the difference set
\[
D_\Omega = (\Omega-\Omega)\setminus\{0\}
\]
creates large cliques in the Cayley graph $\Gamma_\Omega$. As shown in Section~\ref{sec:graph_reformulation}, the presence of a nontrivial additive subspace forces an exponential reduction of achievable logical dimension. The degradation observed under correlated phase noise is therefore explained geometrically: additive symmetry in the noise model induces intrinsic capacity collapse.

\paragraph{(iii) Dual Harmonic Tradeoff Regime: Circuit-Level Bias Loss.}

The third regime becomes relevant when simultaneous protection against bit- and phase-flip errors is required. Even in architectures with strong idle-phase bias, maintaining this bias at the circuit level can be challenging. Analyses of circuit-level biased noise and bias-preserving gate constructions \cite{Puri2020, EtxezarretaMartinez2024} show that residual $X$-type errors may remain during gate execution. When both conjugate error families become operationally relevant, localization constraints in the computational and Fourier domains interact multiplicatively.

Section~\ref{sec:dual_tradeoffs} formalizes this interaction: dual isolation yields the rate bound
\[
R \le 1 - \frac{\gamma_X+\gamma_Z}{2},
\]
revealing an intrinsic harmonic tradeoff between conjugate-domain localization constraints. Loss of operational bias therefore shifts an architecture from the dispersive packing regime into a dual-domain regime where simultaneous localization becomes unavoidable and logical rate is strictly reduced relative to the phase-only limit.

\medskip

Taken together, these considerations show that the harmonic classification provides a coherent geometric framework for understanding biased quantum hardware. Strong bias leads to dispersive packing behavior, correlated dephasing induces additive-symmetry collapse, and circuit-level bias degradation activates dual-domain tradeoffs. In each case, logical capacity is governed not by stabilizer algebra or linear structure, but by the additive geometry of the noise difference set.
\subsection{Open Problems and Research Directions}
\label{subsec:open_problems}

The harmonic classification developed in this work reduces biased quantum capacity to additive geometry of the noise difference set $D_\Omega = (\Omega-\Omega)\setminus\{0\}$. This perspective suggests a focused research program centered on structural applications and sharp capacity characterizations.

\paragraph{Structured noise in physical architectures.}
For realistic phase-biased devices, noise sets $\Omega$ are often geometrically constrained (e.g., correlated phase faults or hardware-induced locality patterns). Determining $\alpha(\Gamma_\Omega)$ for such physically motivated Cayley graphs would translate directly into optimal logical dimensions. Identifying which experimentally relevant noise models fall into the dispersive, subspace-collapse, or dual-tradeoff regimes of Theorem~\ref{thm:geometric_classification} is therefore an immediate application of the framework.

\paragraph{Sharpness of harmonic relaxations.}
For additive Cayley graphs, the Lovász theta number admits a Fourier-positive formulation intrinsic to the harmonic structure developed here. Determining when the semidefinite relaxation is tight,
\[
\vartheta(\Gamma_\Omega)=\alpha(\Gamma_\Omega),
\]
would yield exact logical capacities for broad classes of structured phase-noise models. Such regimes would connect harmonic analysis on finite abelian groups with semidefinite relaxations in zero-error information theory and could provide efficiently computable capacity characterizations for realistic biased quantum devices. Characterizing families of Cayley graphs for which this equality holds therefore represents a natural intersection between additive combinatorics, graph theory, and quantum error correction.

\paragraph{Mixed-error optimization beyond dual isolation.}
Section~\ref{sec:dual_tradeoffs} derived multiplicative rate penalties under explicit dual-domain localization. Whether optimal mixed $X/Z$ codes must satisfy such rigid dual isolation, or whether more flexible harmonic constructions can mitigate this tradeoff, remains open. Resolving this question would clarify the ultimate limits of simultaneous bit–phase protection in strongly biased regimes and determine how close practical architectures can approach the phase-only capacity.

\paragraph{Decoding complexity.}
Proposition~\ref{prop:decoding_equivalence} shows that phase-error recovery reduces exactly to maximum-likelihood decoding of the classical support $S$. For linear supports (e.g., BCH or Reed–Muller codes), efficient decoding algorithms are known. However, for nonlinear spectral supports such as the Nordstrom–Robinson or Julin–Best codes, efficient maximum-likelihood decoders are not known in general. The equivalence established in Proposition~\ref{prop:decoding_equivalence} is therefore exact at the level of error-correction conditions but does not automatically imply computationally efficient decoding procedures.

More broadly, the harmonic translation principle isolates additive geometry—not linearity—as the governing mechanism of phase-local protection. Extending this geometric viewpoint to concrete architectures and mixed-noise optimization problems offers a direct path toward capacity-optimal code design in phase-biased quantum hardware.

\bibliographystyle{quantum}
\bibliography{references}

\end{document}